\documentclass{article}
\usepackage{booktabs, multirow, bm}
\usepackage{amsmath, amssymb, amsthm, mathrsfs}
\usepackage{algorithm}
\usepackage{algorithmic}
\usepackage{threeparttable}
\usepackage{indentfirst}
\usepackage{graphicx, tikz, subfigure}
\usepackage{cases}
\usepackage{cite}
\usepackage{tikz}
\usetikzlibrary{positioning}
\usetikzlibrary{arrows.meta}
\usepackage{makecell}
\usepackage{fancyhdr}
\usepackage{array}
\usepackage{geometry}
\usepackage{rotating}
\usepackage{longtable}
\usepackage{caption}
\usepackage{multirow}
\numberwithin{equation}{section}
\usepackage{xcolor}
\usepackage{ulem}
\usepackage{hyperref}
\allowdisplaybreaks[4]
\usepackage{cleveref}
\crefname{section}{Section}{Sections}
\crefname{figure}{Figure}{Figures}
\crefname{table}{Table}{Tables}
\crefname{equation}{}{}
\crefname{theorem}{Theorem}{Theorems}
\crefname{lemma}{Lemma}{Lemmas}
\crefname{remark}{Remark}{Remarks}
\crefname{problem}{Problem}{Subproblems}

\newtheorem{theorem}{Theorem}[section]

\newtheorem{remark}{Remark}[section]

\theoremstyle{definition}

\crefname{ip}{Co-inversion Problem}{ips}
\newtheoremstyle{MyThmStyle}
{}
{}
{}
{}
{\bfseries}
{}
{ }
{\thmname{#1\thmnumber{ #2\hspace{0.5em}}}\thmnote{(#3)}}
\theoremstyle{MyThmStyle}

\crefname{subisp}{inverse source problem}{ips}
\crefname{subiop}{inverse obstacle problem}{ips}
\graphicspath{{figurenew/}}
\definecolor{bananamania}{rgb}{0.98, 0.91, 0.71}

\begin{document}
	
\title{\textbf{An Efficient Machine Learning Framework for Option Pricing via Fourier Transform}}
	\author{
		Liying Zhang\thanks{School of Mathematical Science, China University of Mining and Technology, Beijing 100083, China, lyzhang@lsec.cc.ac.cn },  Ying Gao\thanks{School of Mathematical Science, China University of Mining and Technology, Beijing 100083, China, SQT2300702044@student.cumtb.edu.cn }}
	
	\date{}
	\maketitle

\begin{abstract}	
  The increasing need for rapid recalibration of option pricing models in dynamic markets places stringent computational demands on data generation and valuation algorithms. In this work, we propose a hybrid algorithmic framework that integrates the smooth offset algorithm (SOA) with supervised machine learning models for the fast pricing of multiple path-independent options under exponential Lévy dynamics. Building upon the SOA-generated dataset, we train neural networks, random forests, and gradient boosted decision trees to construct surrogate pricing operators. Extensive numerical experiments demonstrate that, once trained, these surrogates achieve order-of-magnitude acceleration over direct SOA evaluation. Importantly, the proposed framework overcomes key numerical limitations inherent to fast Fourier transform–based methods, including input data consistency and instability in deep out-of-the-money option pricing.
\end{abstract}

\textbf{Keywords:} Option Pricing, Machine Learning, Fast Fourier Transform.

\section{Introduction}
The option pricing problem is a core issue in the field of financial mathematics. Efficient and accurate pricing is crucial for the effective operation of financial markets and the realization of investors' risk management goals. The development of option pricing theory relies heavily on mathematical models of stock price dynamics. Three prominent classes of models are particularly noteworthy. First, the Black-Scholes model (\cite{2}) posits that the stock price process $\{S_t\}_{0 \leqslant t\leqslant T}$ follows a geometric Brownian motion (GBM) under the risk-neutral measure $\mathbb{Q}$:
\begin{equation}
\label{eq:BSModelStockSDE}
dS_t = rS_tdt + \sigma S_t dB_t^{\mathbb{Q}},
\end{equation}
where $r$ denotes the risk-free rate, $\sigma$ is the constant volatility, and $B_t^{\mathbb{Q}}$ is a standard Brownian motion under $\mathbb{Q}$. This framework, which assumes continuous sample paths and constant volatility, yields a closed-form solution for European options. Second, the Heston model (HM) (\cite{13}) introduces a stochastic differential equation to characterize the dynamic evolution of volatility, with the specific form as follows:
\begin{equation}
\label{eq:HestonModel}
\left\{
\begin{aligned}
&dS_t = rS_tdt+\sqrt{V_t}S_tdB_t^{(1)\mathbb{Q}}, \\
& dV_t = \kappa(\theta - V_t)dt + \sigma\sqrt{V_t}dB_t^{(2)\mathbb{Q}},
\end{aligned}\right.
\end{equation}
where $V_t$ represents the instantaneous variance, $\kappa$ the mean-reversion rate, and $\theta$ the long-run variance level. While maintaining the assumption of continuous price trajectories, this model allows volatility to evolve stochastically, thereby accounting for leptokurtic asset returns. Third, exponential Lévy models accommodate discontinuous movements in stock prices through the inclusion of jump components (\cite{16}). This category encompasses specifications such as jump-diffusion processes (\cite{20}), the variance Gamma process (\cite{19}), and the Carr–Geman–Madan–Yor (CGMY) process
(\cite{6}). By relaxing the path continuity assumption, exponential Lévy models are particularly adept at capturing the abrupt, large-scale movements frequently observed in financial markets.

Due to the complex mathematical operations in the aforementioned stock price models, closed-form solutions are often hard to derive. Consequently, numerical algorithms have become a key method to price both single and multiple options. For single option pricing, beyond the classical Monte Carlo method(\cite{1,22}), and then various numerical pricing techniques have been developed in the financial field. An algorithm based on the binomial tree model introduced by Cox et al. (\cite{8}) employs discrete-time approximation of asset price paths, establishing a flexible pricing framework for path-dependent derivatives such as American options. Boyle (\cite{3}) subsequently developed an algorithm based on the trinomial tree model, incorporating an additional intermediate price movement to enhance numerical accuracy and convergence rate, with extensions to multi-asset options. Brennan et al. (\cite{4}) demonstrate that the option pricing problem can be formulated as a partial differential equation (PDE) boundary value problem and can be effectively solved using the finite difference method. Meanwhile, this method is also useful for some path-dependent derivatives, such as American options, where early exercise
features transform the pricing task into a free boundary problem (\cite{24}).

Although these methods offer distinct advantages in terms of flexibility, their application remains constrained, particularly in high-dimensional or high-precision pricing scenarios where computational costs increase substantially. Consequently, the development of efficient and robust pricing algorithms represents an ongoing research imperative. \cite{7} proposes an algorithm based on Fourier transform (FT), which we refer to as the Carr-Madan algorithm (CMA). The CMA introduces an offset term to modify the option pricing formula, ensuring that its Fourier transform admits a closed-form expression, and subsequently recovers the option price via an inverse transform. To further improve efficiency, we leverage the relationship between the smoothness of a function and the decay rate of its Fourier transform tail to propose a novel offset term. This modification enhances the algorithm's efficiency, and the corresponding improved method is termed the smooth offset algorithm (SOA).
Numerical experiments show that under two option types (European options and digital options) and three stock price models (GBM, HM, and EVGP), when error tolerances are equivalent, the operational efficiency of SOA is significantly superior to that of CMA, with its running time only accounting for $ 60\%-70\% $ of CMA's. Details are provided in Section \ref{sec:TheEffcientPricingAlgorithm}.

 The research on algorithms for single option pricing is relatively mature. However, with the continuous expansion of business scales in financial institutions such as investment banks, the demand for pricing large-scale option portfolios has become increasingly prominent, and the traditional single option pricing algorithms can no longer meet the efficiency requirements of batch processing. Existing studies have also begun to focus on pricing algorithms for multiple options. For example, under the discrete-time framework, Derman and Kani proposed a pricing algorithm based on the implied binomial tree (IBT) model (\cite{9}). This algorithm can derive the local volatility surface from observed market option prices, providing a flexible numerical tool for the fitting and pricing of multiple option portfolios. However, it relies on local volatility theory, requires the assumption of a unique volatility surface, and imposes strict constraints on input option data. The Brodie-Kaya (BK) algorithm proposed by Brodie and Kaya (\cite{5}) constructs accurate simulation paths for stochastic volatility models, enabling batch pricing of options with various payoff structures through a single simulation and significantly improving the computational efficiency of the Monte Carlo method. Nevertheless, this algorithm is only applicable to affine diffusion processes and is difficult to extend to non-affine models such as exponential Lévy processes. 

Compared with traditional algorithms, machine learning (ML) has seen growing adoption in option pricing.
Recent research increasingly leverages ML models to establish the mapping between option attributes and their theoretical prices. For example, \cite{15} investigates ordinary least squares, radial basis function networks, multi-layer perceptron networks, and projection pursuit on the Black–Scholes option pricing problem. It demonstrates that ML models achieve accurate pricing and delta-hedging. Likewise, \cite{11} analyzes basket call options. It shows that neural networks (NNs) can compute prices at a dramatically faster speed. This speed is approximately one million times faster than traditional methods, while maintaining satisfactory accuracy. 
To address the problems of inadequate model accuracy and anti-noise stability in option pricing, \cite{17} adopts an experimental evaluation strategy integrated with financial theories, investigates the application effect of ensemble learning and the interaction between sliding windows and noise, verifies the applicability of ensemble learning to option pricing, and clarifies the adaptation logic between the two as well as the influence law on the pricing stability of the model.
Other ML models address option pricing problems by numerically solving the associated PDEs through NNs. This method is commonly referred to as physics informed neural networks (PINNs) (\cite{10,12,21,26}). Although traditional machine learning models have achieved certain breakthroughs in pricing accuracy and efficiency, they still face core challenges in practical financial applications. The dynamic changes in market conditions require models to have rapid updating capabilities, while pricing models typically rely on large-scale data for training. Traditional numerical methods exhibit inefficiencies in data generation, and illiquid option contracts also suffer from data scarcity, making it difficult to support high-frequency iterations. Thus, we propose a pricing algorithm integrating SOA with machine learning. This algorithm uses option data generated by SOA as the training set, allowing the data-driven model to learn the mapping relationship between option attributes and their corresponding prices. Numerical experiments demonstrate  that compared with fast Fourier transform (FFT) the machine learning models trained in this way are not constrained by input conditions and maintains good numerical stability.
 
The remainder of this paper is organized as follows. Section \ref{Preliminary} presents the stock price models. 
Section \ref{sec:TheEffcientPricingAlgorithm} briefly reviews CMA and introduces the theoretical foundations of the SOA method, with its advantages being verified through numerical experiments. The FFT-based
and ML-based frameworks are discussed in Section \ref{sec:FrameworksForPricingMultOptions} and \ref{sec:MLFrameworksForPricingMultOptions}, with numerical experiments that compare them and the plain sequential pricing approach by SOA. Finally, Section \ref{sec:Conclusions} concludes.

\section{Stock Price Models\label{Preliminary}}
\subsection{Unified Framework of Stock Price Models}
This section incorporates three types of stock price models, namely geometric Brownian motion (GBM), Heston model (HM), and exponential variance Gamma process (EVGP), into a unified framework and derives the corresponding characteristic function for each model. Consider a stock price process $ \{S_t\}_{0 \leqslant t\leqslant T} $ over the time interval $ [0,T] $. Let $ \{X_t\}_{0 \leqslant t\leqslant T} $ be a Lévy process under measure $ \mathbb{Q} $, and the exponential Lévy process for the stock price is driven by the following stochastic differential equation (SDE):
\begin{equation*}
\label{eq:ExpLevyProcessSDE}
dS_t = r S_t \, dt + S_t \, dX_t, \qquad 0 \leqslant t \leqslant T,
\end{equation*}
where $ r $ denotes the risk-free interest rate. The solution to this SDE can be expressed as
\begin{equation}
\label{eq:ExpLevyProcessSDESolution}
S_t = S_0 \exp \big[ (r + \zeta)t + X_t \big]=
S_0 \exp(rt + X_t^{\dagger})
, \qquad 0 \leqslant t \leqslant T,
\end{equation}
where $X_t^{\dagger} = \zeta t + X_t$, and $\zeta$ serves as a compensator, ensuring that the discounted process $\left\{e^{-rT}S_t\right\}_{0 \leqslant t\leqslant T}$ is a martingale under measure $\mathbb{Q}$. 
To integrate the three stock price models into a unified framework, we rewrite each of them in the form of (2.1) and derive the characteristic function of $ X_t^\dagger $ using the characteristic function of $ X_t $, specifically $  \Phi_{X_t^\dagger}(z) = e^{i\zeta z t} \Phi_{X_t}(z).   $

For the GBM, its compensator term is $ \zeta = -\frac{1}{2}\sigma^2 $, where $ X_t = \sigma B_t^\mathbb{Q} $ with the characteristic function
$$  \Phi_{X_t}(z) = \exp\left(-\frac{1}{2}\sigma^2 z^2 t\right).   $$

For the EVGP, its compensator term is given by
$ \zeta = \frac{1}{\nu} \ln\left(1 - \theta\nu - \frac{1}{2}\sigma^2\nu\right),  $ 
and the corresponding Lévy process is $ X_t = \text{VGP}_t^{(\theta,\sigma,\nu)} $ with the characteristic function
$$  \Phi_{\text{VGP}_t^{(\theta,\sigma,\nu)}}(z) = \left(1 - iz\theta\nu + \frac{1}{2}\sigma^2\nu z^2\right)^{-\frac{t}{\nu}}.  $$ 

Especially, the Heston model is not an exponential Lévy process, and as a result, no explicit expression for$  X_t^\dagger $ can be obtained. However, the characteristic function of $X^{\dagger}_t$ can still be obtained by exploiting the log-price representation
\begin{equation*}
\ln S_t = \ln S_0 + rt + X_t^{\dagger}.
\end{equation*}
From this relation, the characteristic function of $X_t^{\dagger}$ can be derived
\begin{equation*}
\Phi_{X_t^{\dagger}}(z) = \mathbb{E}\left[e^{iz(\ln S_t - \ln S_0 - rt)}\right] = e^{-iz(\ln S_0 + rt)}\Phi_{\ln S_t}(z).
\end{equation*}
The closed-form expression of $\Phi_{\ln S_t}(z)$ is available (\cite{14})
, and is given by
\begin{equation*}
\begin{aligned}
\Phi_{\ln S_t}(z) = \frac{\exp \left[ i z \ln S_0+i zr t+\frac{\kappa \theta t(\kappa-i \rho \sigma z)}{\sigma^2}\right]}{\left(\cosh \frac{\tau t}{2}+\frac{\kappa-i \rho \sigma z}{\tau} \sinh \frac{\tau t}{2}\right)^{\frac{2 \kappa \theta}{\sigma^2}}} 
\exp \left[\frac{-\left(z^2+i z\right) V_0}{\tau \operatorname{coth} \frac{\tau t}{2}+\kappa-i \rho \sigma z}\right],
\end{aligned}
\end{equation*}
where $\tau=\sqrt{\sigma^2\left(z^2+iz\right)+(\kappa-i \rho \sigma z)^2}$, and $V_0$ is the initial variance.
\begin{remark}
To simplify notation, the initial stock price  $ S_0 $  is normalized to 1, and results for any arbitrary  $ S_0 $  can be obtained by rescaling the normalized price. For European call options: 
\begin{equation*}
e^{-rT}\mathbb{E}^Q\left[(S_T - K)^+\right] = S_0 e^{-rT}\mathbb{E}^Q\left[\left(\exp\left(rT + X_T^\dagger\right) - \frac{K}{S_0}\right)^+\right] ,
\end{equation*}
 where  $ e^{-rT}\mathbb{E}^Q\left[\left(\exp\left(rT + X_T^\dagger\right) - \frac{K}{S_0}\right)^+\right] $  is the normalized price under $  S_0' = 1 $  and  $ K' = \frac{K}{S_0} $ , and the true price needs to be multiplied by $  S_0  $. Similarly, for digital options: 
\begin{equation*}
  e^{-rT}\mathbb{E}^Q\left[1_{\{S_T \geqslant K\}}\right] = e^{-rT}\mathbb{E}^Q\left[1_{\left\{\exp\left(rT + X_T^\dagger\right) \geqslant \frac{K}{S_0}\right\}}\right] ,
\end{equation*}  
 whose scaling factor is 1, meaning the normalized price equals the true price.
\end{remark}
\section{SOA for Single Option Pricing \label{sec:TheEffcientPricingAlgorithm}}
\subsection{Revised CMA}
CMA modifies the original option pricing formula by introducing an offset term, ensuring that the FT of
the resulting expression exists and admits a closed-form representation. Consider a European or digital option whose underlying stock price $\left\{S_t\right\}_{0\leqslant t\leqslant T}$ follows exponential L{\'e}vy process  with $S_t = \exp\left(rt + X_t^\dagger\right)$. The price function modified by the CM offset term is
	\begin{equation}
	\label{eq:CMPriceFuncWithOffset}
	\hat{V}(k) = \begin{cases}
	V(k) - e^{-rT}\left(e^{rT}  - e^k\right)^+ , \text{\,\,\,\,for European options},\\
	V(k) - e^{-rT}\mathrm{1}_{e^{rT}\geqslant e^k}, \text{\,\,\,\,for digital options}.
	\end{cases}
	\end{equation}
	Then the FT of $\hat{V}$ is given by
	\begin{equation*}
	\label{eq:EtaFuncCMOffset}
	\eta(z)=\mathcal{F}\left[{\hat{V}}\right](z) = 
	\begin{cases}
	\frac{1}{iz(iz+1)}e^{izrT}\left[\Phi^{\dagger}(z-i)-1\right], \text{\,\,\,\,for European options},\\
	\frac{1}{iz}e^{(iz-1)rT}\left[\Phi^{\dagger}(z)-1\right], \text{\,\,\,\,for digital options},
	\end{cases}
	\end{equation*}
	where $\Phi^\dagger$ is the FT of $X^\dagger_T$.

The function $\hat{V}(k)$ can be obtained via the inverse FT of $\eta$:
\begin{equation}
\label{eq:InverseFTOfEtaFunc}
\hat{V}(k) = \mathcal{F}^{-1}\left[\eta\right](k) =  \frac{1}{2\pi}\int_{-\infty}^{\infty}e^{-izk}\eta(z)dz.
\end{equation} 
The integral in \eqref{eq:InverseFTOfEtaFunc} can only be evaluated numerically. Once $\hat{V}(k)$ has been obtained, $V(k)$ is given by
\begin{equation*}
V(k) = \begin{cases}
\hat{V}(k) + e^{-rT}\left(e^{rT}  - e^k\right)^+ , \text{\,\,\,\,for European options},\\
\hat{V}(k) + e^{-rT}\mathrm{1}_{e^{rT}\geqslant e^k}, \text{\,\,\,\,for digital options}.
\end{cases}
\end{equation*}

To develop an algorithm that achieves better computational efficiency than CMA, it is essential to ensure that the tail of the function $\eta$ decays more rapidly. A faster tail decay rate accelerates the convergence of the numerical integration used to evaluate \eqref{eq:InverseFTOfEtaFunc}, thereby improving the overall efficiency.
The intuition behind this point is illustrated in Figure \ref{fig:ImpactsOfTailDecayOnNumInteg}, where two functions  $\eta_i,$ $i=1,2$ are considered. The objective is to evaluate integrals 
\begin{equation*}
\int_{0}^{\infty}\eta_i(z)dz, i=1,2,
\end{equation*}
which are approximated using a truncation point $B>0$:
\begin{equation*}
\int_{0}^{\infty}\eta_i(z)dz\approx\int_{0}^{B}\eta_i(z)dz, i=1,2.
\end{equation*}
\begin{figure}[t]
	\centering
	\caption{Impacts of Tail Decay Rates on the Convergence Speed of Numerical Integrations}
	\label{fig:ImpactsOfTailDecayOnNumInteg}
	\includegraphics[scale=0.7]{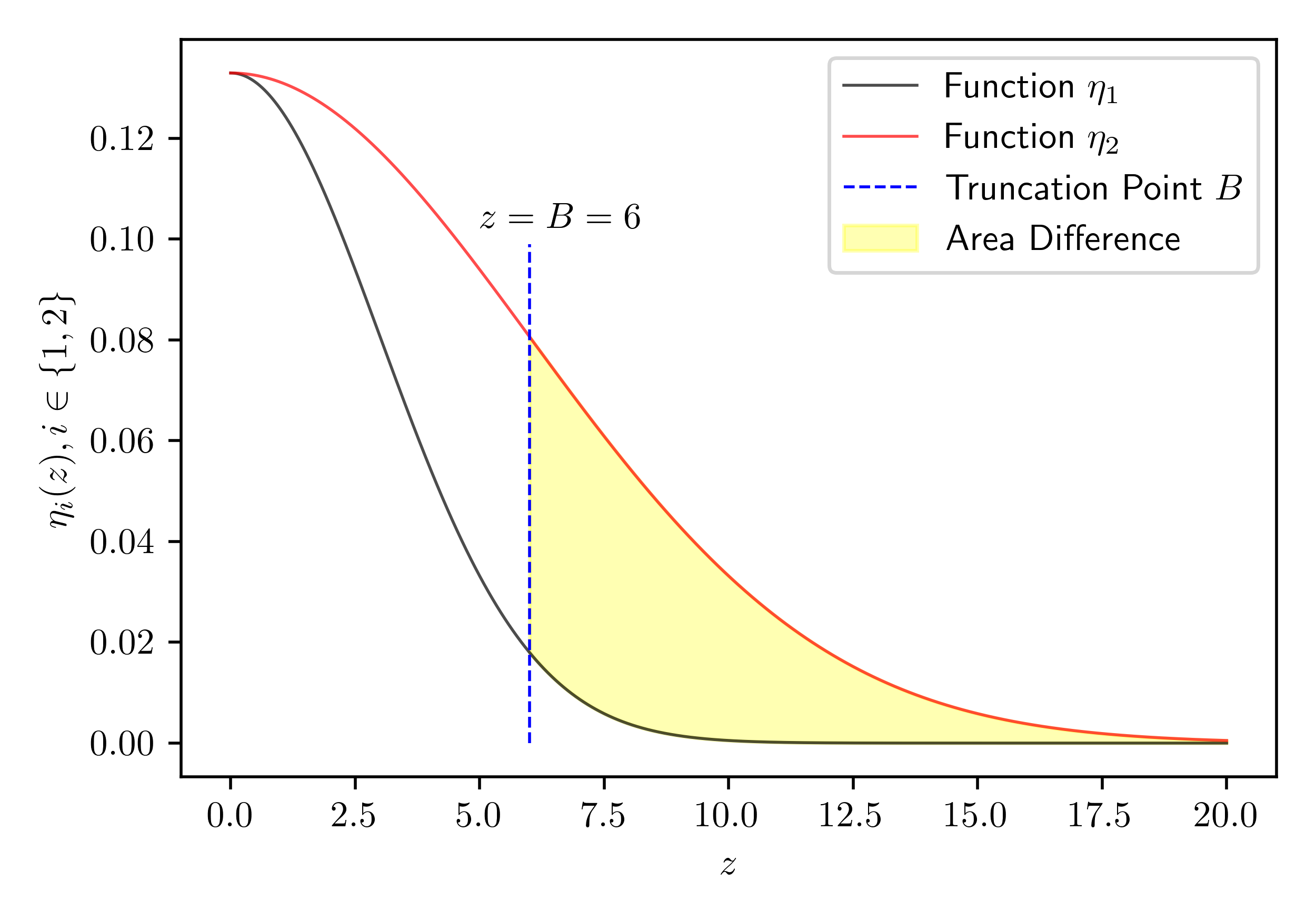}
\end{figure}
Evidently, this truncation introduces a larger approximation error for $\eta_2$ since the tail of $\eta_2$ decays more slowly. The vertical truncation line at $z=B$ omits a non-negligible portion of the area beyond $z=B$. In contrast, $\eta_1$ exhibits a faster decay rate and approaches zero in the neighborhood of $z=B$, leading to a more accurate approximation. Consequently, to achieve comparable accuracy for $\eta_2$, a larger truncation point $B$ must be selected. However, expanding the integration interval $[0,B]$ requires a greater number of subintervals in the numerical integration, thereby increasing the computational cost. In the application of this paper, $\eta$ is the FT of $\hat{V}$, and thus the properties of $\hat{V}$ must be closely linked to the tail decay rate of $\eta$. Specifically, a smoother $\hat{V}$ results in a faster decay of $\eta$ at its tail. See Theorem 2.5 in Reference \cite{25} for details.

The CM offset term is defined as the discounted payoff evaluated at the terminal stock price, i.e. $\mathbb{E}^{\mathbb{Q}}[S_T] = e^{rT}$. However, because the payoff function itself lacks smoothness, the resulting CM offset term also exhibits poor smoothness, which negatively affects convergence. 
To address this issue, the smooth offset term in SOA is constructed by averaging the payoff values over a range of stock prices rather than evaluating it at a single point. This averaging operation improves the smoothness of the offset term and thereby accelerates the algorithm. The formal result is presented in Theorem \ref{thmSOMAResults}.
\begin{theorem}
	\label{thmSOMAResults}
	Consider a European or digital option whose underlying stock price $\left\{S_t\right\}_{0\leqslant t\leqslant T}$ follows exponential L{\'e}vy process  with $S_t = \exp\left(rt + X_t^\dagger\right)$. The price function modified by the smooth offset term is
	\begin{equation*}
	\label{eq:SOPriceFuncWithOffset}
	\hat{V}(k) = \begin{cases}
	V(k) - e^{-rT}\mathbb{E}^{\mathbb{Q}}\left[(S_T^\wedge-e^k)^+\right], \text{\,\,\,\,for European options},\\
	V(k) - e^{-rT}\mathbb{E}^{\mathbb{Q}}\left[ \mathbf{1}_{S_T^\wedge\geqslant e^k} \right], \text{\,\,\,\,for digital options},
	\end{cases}
	\end{equation*}
	where $S_T^\wedge$ is the terminal value of process $\left\{S_t^{\wedge}\right\}_{0\leqslant t\leqslant T}$ that follows SDE
	\begin{equation*}
	\begin{cases}
	dS_t^\wedge = rS_t^\wedge dt  + S_t^\wedge  dB_t^\mathbb{Q},\\
	S_0^\wedge = 1,
	\end{cases}
	\end{equation*}
	which is a GBM under measure $\mathbb{Q}$ with volatility 100\%.
	Then the FT of $\hat{V}$ is given by
	\begin{equation*}
	\label{eq:EtaFuncCMOffset}
	\eta(z)=\mathcal{F}\left[{\hat{V}}\right](z) = 
	\begin{cases}
	\frac{1}{iz(iz+1)}e^{izrT}\left[\Phi^{\dagger}(z-i)-\Phi^{\wedge}(z-i)\right], \text{\,\,\,\, for European options},\\
	\frac{1}{iz}e^{(iz-1)rT}\left[\Phi^{\dagger}(z)-\Phi^{\wedge}(z)\right], \text{\,\,\,\,for digital options},
	\end{cases}
	\end{equation*}
	where $\Phi^\dagger$ is the FT of $X^\dagger_T$ and function $\Phi^\wedge$ is the FT of random variable $-\frac{T}{2}+B_T^{\mathbb{Q}}$, given by
	\begin{equation*}
	\Phi^\wedge(z) = \exp\left[-\frac{T}{2}(iz+z^2)\right].
	\end{equation*}
\end{theorem}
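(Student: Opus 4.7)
The plan is to compute $\mathcal{F}[\hat V](z)$ by unifying the two terms of $\hat V$ into a common structural form and then exploiting the cancellation of the parts that obstruct Fourier integrability. Solving the auxiliary GBM SDE in closed form gives $S_T^{\wedge}=\exp(rT+Y_T)$ with $Y_T:=-T/2+B_T^{\mathbb Q}$, so in the European case both the true price $V$ and the smooth offset are instances of
\begin{equation*}
g_Z(k):=e^{-rT}\mathbb E\bigl[(e^{rT+Z}-e^k)^{+}\bigr],
\end{equation*}
with $Z=X_T^{\dagger}$ and $Z=Y_T$ respectively; the digital case reduces analogously to $h_Z(k):=e^{-rT}\mathbb P(k\le rT+Z)$.

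First I would establish an auxiliary Fourier identity: for $z$ in the strip $-1<\operatorname{Im}(z)<0$,
\begin{equation*}
\int_{-\infty}^{\infty}e^{izk}g_Z(k)\,dk=\frac{e^{izrT}\Phi_Z(z-i)}{iz(iz+1)},
\end{equation*}
and, by the same mechanism, $\int e^{izk}h_Z(k)\,dk=e^{(iz-1)rT}\Phi_Z(z)/(iz)$. Both identities follow by Fubini (the factor $e^{-\operatorname{Im}(z)k}$ furnishes absolute integrability inside the strip), after which the inner integral $\int_{-\infty}^{rT+Z}e^{izk}(e^{rT+Z}-e^k)\,dk$ is evaluated in closed form and taking expectation produces the characteristic function $\Phi_Z(z-i)=\mathbb E[e^{(iz+1)Z}]$.

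Next I would apply these identities to $Z=X_T^{\dagger}$ and $Z=Y_T$ and subtract. The delicate point is that $g_{X_T^{\dagger}}$ and $g_{Y_T}$ are individually not Lebesgue-integrable on $\mathbb R$: by the martingale conditions $\Phi^{\dagger}(-i)=\Phi^{\wedge}(-i)=1$, each tends to $1$ as $k\to-\infty$, so their Fourier transforms do not exist on the real axis. For the difference $\hat V=g_{X_T^{\dagger}}-g_{Y_T}$ the non-decaying constants cancel and the remaining tails are integrable, using the finite exponential moments of $X_T^{\dagger}$ supplied by $\Phi^{\dagger}$ together with the Gaussian tails of $Y_T$. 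The subtracted identity is therefore valid inside the strip, and since $\Phi^{\dagger}(z-i)-\Phi^{\wedge}(z-i)$ vanishes at both $z=0$ and $z=i$ (the latter from $\Phi^{\dagger}(0)=\Phi^{\wedge}(0)=1$), the right-hand side is holomorphic across the real axis; analytic continuation then extends the formula to $z\in\mathbb R$.

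The digital case closes identically, using the second auxiliary identity. Finally $\Phi^{\wedge}$ is obtained by a direct Gaussian moment-generating-function computation: $\Phi^{\wedge}(z)=\mathbb E[e^{iz(-T/2+B_T^{\mathbb Q})}]=e^{-izT/2}\cdot e^{-z^2T/2}=\exp[-T(iz+z^2)/2]$. The main obstacle is the strip-and-continuation step, which rigorously converts two formally divergent identities into a bona fide Fourier transform of the integrable difference $\hat V$; once this is in place, the remaining manipulations mirror the CMA derivation preceding the theorem, with $\Phi^{\wedge}(z-i)$ (respectively $\Phi^{\wedge}(z)$) replacing the constant $1$ in the European (respectively digital) CMA formula.
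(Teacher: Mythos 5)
Your proposal is correct and arrives at exactly the stated formulas, but it follows a genuinely different route from the paper. The paper's proof is a two-line reduction: it observes that the smooth offset is itself the price $V^{\wedge}(k)$ of a virtual European (or digital) option on the GBM $S^{\wedge}$, that the martingale property forces $\mathbb{E}^{\mathbb{Q}}[S_T]=\mathbb{E}^{\mathbb{Q}}[S_T^{\wedge}]=e^{rT}$ so the two options share the \emph{same} CM offset term, and then writes $\hat V = \bigl[V - e^{-rT}(e^{rT}-e^k)^+\bigr]-\bigl[V^{\wedge}-e^{-rT}(e^{rT}-e^k)^+\bigr]=V^A-V^B$, applying the already-displayed CMA transform to $V^A$ and $V^B$ and subtracting, so the ``$-1$'' terms cancel and $\Phi^{\wedge}$ replaces the constant $1$. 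You instead unify $V$ and the offset as $g_Z$ (resp.\ $h_Z$) with $Z=X_T^{\dagger}$ or $Z=-\tfrac{T}{2}+B_T^{\mathbb{Q}}$, compute $\mathcal{F}[g_Z]$ directly by Fubini in the strip $-1<\operatorname{Im}z<0$ (your inner-integral evaluation and the identities $\Phi_Z(z-i)=\mathbb{E}[e^{(iz+1)Z}]$, $\Phi^{\wedge}(z)=\exp[-\tfrac{T}{2}(iz+z^2)]$ check out), and recover the real-axis statement through cancellation of the non-decaying parts plus continuation using the removable singularity at $z=0$. What each buys: the paper's argument is shorter and cleanly reuses the CMA result as a black box, while yours is self-contained and makes explicit the integrability/damping issues the paper leaves implicit — at the price of a mild extra hypothesis, since the strip Fubini for $g_{X_T^{\dagger}}$ needs $\mathbb{E}\bigl[e^{(1+v)X_T^{\dagger}}\bigr]<\infty$ for some $v>0$ (the same hidden Carr--Madan-type condition underlying the formula the paper cites), and the passage to $\operatorname{Im}z\uparrow 0$ is most easily closed by a continuity/dominated-convergence limit rather than full analytic continuation, since $\hat V\in L^1(\mathbb{R})$ makes its transform continuous up to the axis and $\Phi^{\dagger}(z-i)$ is continuous there because $\mathbb{E}[e^{X_T^{\dagger}}]=1$.
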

\begin{proof}
The smooth offset term 
$ e^{-rT}\mathbb{E}^{\mathbb{Q}}\left[(S_T^\wedge-e^k)^+\right]$ 
can be interpreted as the price of a virtual European option whose underlying stock price follows the process  $\left\{S_t^\wedge\right\}_{0\leqslant t\leqslant T}$. 
Since $\left\{S_t^\wedge\right\}_{0\leqslant t\leqslant T}$ follows GBM, which is a special case of exponential L{\'e}vy processes. 

Let $V^\wedge(k) = e^{-rT}\mathbb{E}^{\mathbb{Q}}\left[(S_T^\wedge-e^k)^+\right]$. Then $V^\wedge(k)$ denotes the price of the virtual European option mentioned above. It is worth noting that although $\left\{S_t\right\}_{0\leqslant t\leqslant T}$ and $\left\{S_t^\wedge\right\}_{0\leqslant t\leqslant T}$ are distinct processes, they share the same terminal expectation as the discounted value of each process must constitute a $\mathbb{Q}-$martingale:
\begin{equation*}
\mathbb{E}^{\mathbb{Q}}\left[S_T\right] =  	\mathbb{E}^{\mathbb{Q}}\left[S_T^{\wedge}\right]  = S_0e^{rT} = e^{rT}.
\end{equation*}
The modified price function can thus be expressed as
\begin{equation*}
\begin{aligned}
\hat{V}(k) =& V(k)  - V^\wedge(k) 
=  \left[V(k) - e^{-rT}\left(\mathbb{E}^{\mathbb{Q}}\left[S_T\right]-e^k\right)^+\right]  -  \left[V^\wedge(k) - e^{-rT}\left(\mathbb{E}^{\mathbb{Q}}\left[S_T^{\wedge}\right]-e^k\right)^+\right].
\end{aligned}
\end{equation*}
Define
\begin{equation*}
V^A(k) = V(k) - e^{-rT}\left(\mathbb{E}^{\mathbb{Q}}\left[S_T\right]-e^k\right)^+ \text{\,\,and\,\,}V^B(k) =V^\wedge(k)-e^{-rT}\left(\mathbb{E}^{\mathbb{Q}}\left[S_T^{\wedge}\right]-e^k\right)^+,
\end{equation*}
and we obtain
\begin{equation*}
\mathcal{F}\left[V^A\right](z) =  \frac{1}{iz(iz+1)}e^{izrT}\left[\Phi^{\dagger}(z-i)-1\right]
\end{equation*}
and
\begin{equation*}
\mathcal{F}\left[V^B\right](z) =  \frac{1}{iz(iz+1)}e^{izrT}\left[\Phi^{\wedge}(z-i)-1\right].
\end{equation*}
Therefore, the FT of $\hat{V}$ is
\begin{equation*}
\eta(z) = \mathcal{F}\left[V^A\right](z) - \mathcal{F}\left[V^B\right](z) = 
\frac{1}{iz(iz+1)}e^{izrT}
\left[\Phi^{\dagger}(z-i) -  \Phi^{\wedge}(z-i)\right].
\end{equation*}
The proof is provided for European options, while digital options are treated analogously. 
\end{proof}
The function $\hat{V}(k)$ is obtained by applying the inverse FT to $\eta(z)$, i.e. $\hat{V}(k) = \mathcal{F}^{-1}\left[\eta\right](k)$, i.e.
\begin{equation*}
V(k) = \begin{cases}
\hat{V}(k) + e^{-rT}\mathbb{E}^{\mathbb{Q}}\left[\left(S_T^\wedge - e^k\right)^+\right], \text{\,\,\,\,for European options},\\
\hat{V}(k) + e^{-rT}\mathbb{E}^{\mathbb{Q}}\left[\mathbf{1}_{S_T^\wedge \geqslant e^k}\right], \text{\,\,\,\,for digital options},
\end{cases}
\end{equation*}
where the smooth offset terms admit close-form expressions. Specifically, let $F_N$ be the cumulative density function of the standard normal distribution, and then we have
\begin{equation*}
\begin{cases}
e^{-rT}\mathbb{E}^{\mathbb{Q}}\left[\left(S_T^\wedge - e^k\right)^+\right] = F_N(d_1) - e^{k-rT}F_N(d_2),\\
e^{-rT}\mathbb{E}^{\mathbb{Q}}\left[\mathbf{1}_{S_T^\wedge \geqslant e^k}\right] = e^{-rT}F_N(d_2), \\
d_{1,2} = \frac{1}{\sqrt{T}} \left[-k + \left(r\pm \frac{1}{2}\right)T\right].
\end{cases}
\end{equation*}

To illustrate the contrast between the two offset terms, we draw offset terms for both option types in Figure \ref{fig:CompareOffsets}.
It can be observed that the smooth offset terms eliminate the sharp corners at $k = rT$, resulting in continuous and differentiable curves.
\begin{figure}[h]
	\centering
	\caption{CM and Smooth Offset Terms for European and digital options ($ r $=2.0\%, $ T $=3 Months)}
	\label{fig:CompareOffsets}
	\includegraphics[scale=0.62]{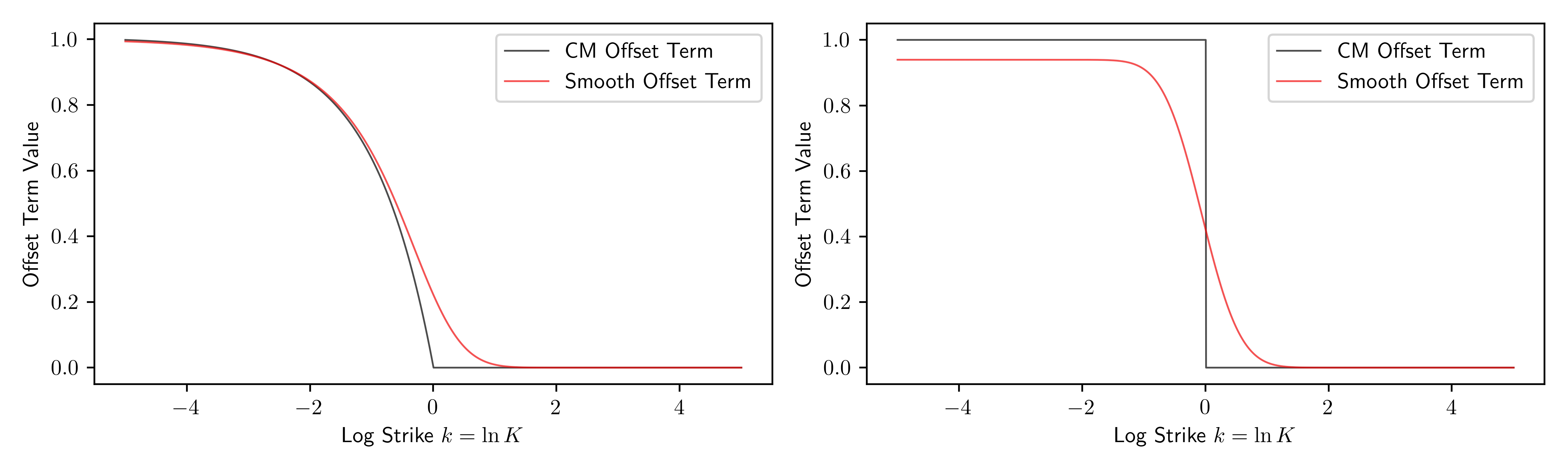}
\end{figure}
Additionally, Figure \ref{fig:CompareEtaFunctions} compares the functions $\eta$ corresponding to different combinations of offset terms and option types under the settings given by Table \ref{tab:DetailsOfOptions} and \ref{tab:DetailsOfStockPriceModels} presented in section \ref{subsec:SOANumExp}.
These observations indicate that the $\eta$ functions derived from the smooth offset terms decay to $0$ significantly faster than those based on the CM offset terms.
This advantage is particularly pronounced for digital options, whose payoff functions exhibit lower smoothness than European options.
\begin{figure}[ht]
	\centering
	\caption{Cases of $\eta$ under Different Settings for the Two Offset Terms}
	\label{fig:CompareEtaFunctions}
	\includegraphics[scale=0.52]{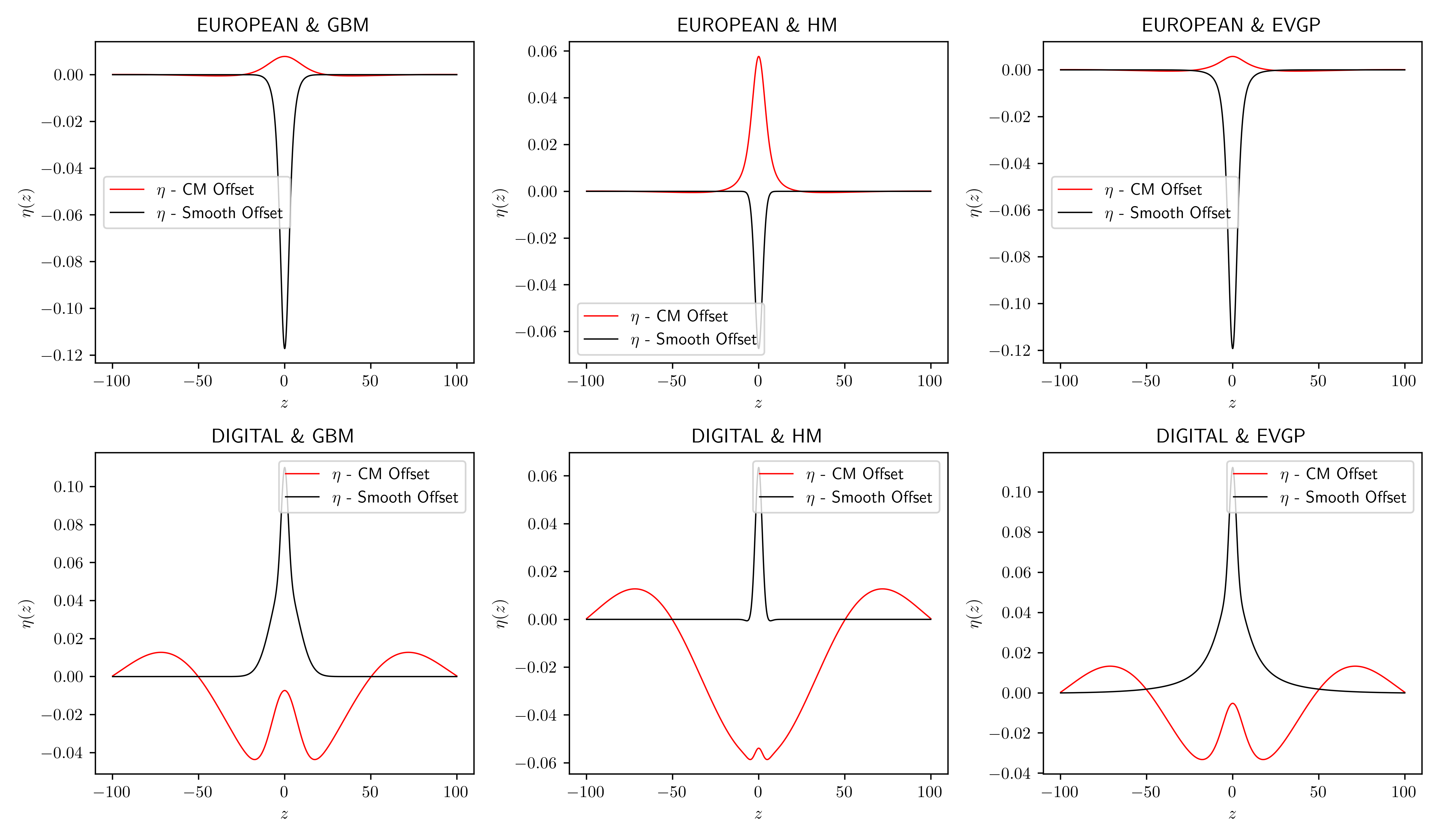}
\end{figure}

\subsection{Numerical Experiments for SOA and CMA \label{subsec:SOANumExp}}

We verify the advantages of SOA over CMA through numerical experiments. The tests cover two option types, including European and digital, as well as three stock price models, including GBM, HM and EVGP.
The specifications of the options and stock price models are summarized in Table \ref{tab:DetailsOfOptions} and \ref{tab:DetailsOfStockPriceModels}, respectively. To verify the accuracy of CMA and SOA, appropriate computational benchmarks are required. When stock prices are assumed to follow GBM, closed-form solutions are available for both European and digital options, and these solutions serve as the benchmarks for this case. In contrast, for HM and EVGP, no closed-form expressions exist. Consequently, for these cases, MC methods are employed to generate benchmarks.
\begin{table}[h]
	\renewcommand{\arraystretch}{1.5} 
	\centering
	\caption{Details of Options in the Numerical Experiment}
	\label{tab:DetailsOfOptions}
	\begin{tabular}{cccccccc}
		\hline
	   Option Type & \makecell{Spot Price\\($S_0$)} & \makecell{Strike Price\\($K$)} & \makecell{Maturity\\($T$)} & \makecell{Risk-free Rate\\($r$)} \\
		\hline
		European  & $150.0$ & $100.0$ & $3$ months ($0.25$) & $2.0\%$  \\
		Digital & $150.0$ & $100.0$ & $3$ months ($0.25$) & $2.0\%$  \\
		\hline
	\end{tabular}
\end{table}
\begin{table}
	\renewcommand{\arraystretch}{1.5} 
	\centering
	\caption{Details of Stock Price Models in the Numerical Experiment}
	\label{tab:DetailsOfStockPriceModels}
	\begin{tabular}{cccccccc}
		\hline
		Model Type&&&& Parameters  \\
		\hline
		GBM  &&&& $\sigma = 0.25$  \\
		HM   &&&& $(\kappa, \theta, \sigma, \rho, V_0) = (2.30, 0.36, 0.10, 0.60, 0.49)$   \\
		EVGP &&&& $(\theta, \sigma, \nu) = (0.10, 0.20, 0.30)$   \\
		\hline
	\end{tabular}
\end{table}

Hereafter, we elaborate on the specific implementation processes of CMA and SOA. The most important step of CMA or SOA is numerically evaluating the following integral:
\begin{equation}
\label{eq:FourierInvTransIntergralWithTruncationPoint}
\hat{V}(k)=\frac{1}{2\pi}\int_{-\infty}^{\infty}e^{-izk}\eta(z)dz
=\frac{1}{\pi}\int_0^\infty e^{-izk}\eta(z)dz.
\end{equation}
The integral in  \eqref{eq:FourierInvTransIntergralWithTruncationPoint} is numerically computed using the Simpson's rule: given a truncation point  $ B $ , the interval $ [0, B] $ is divided into  $ N $  subintervals, and then:
\begin{equation}
\label{eq:NumericalFormulaHatVk}
\hat{V}(k) \approx \frac{\Delta z}{\pi}\sum_{j=0}^{N}w_je^{-ijk\Delta z}\eta(j\Delta z),
\end{equation}
where the weights $w_j$ are the weight coefficients corresponding to the node in Simpson's numerical integration rule.

Equation \eqref{eq:NumericalFormulaHatVk} better illustrates the advantages of SOA: the smoothness of $\hat{V}$ leads to rapid tail decay of $\eta$ as $z \to \infty$, thereby allowing for a smaller truncation point $B$ to achieve relatively accurate option pricing. As shown in Figure \ref{fig:CompareIntegralLimits}, which plots the option price against $B$ for different cases, both CMA and SOA converge to the reference value. However, SOA attains convergence with a significantly smaller $B$, while CMA exhibits slower convergence$-$a difference that is particularly pronounced for digital options. Since SOA achieves satisfactory accuracy with a smaller $B$, the number of subintervals $N$ required for the numerical integration in \eqref{eq:NumericalFormulaHatVk} can be reduced, thus improving computational efficiency.
\begin{figure}[t]
	\centering
		\caption{Relationship between Truncation Point $ B $ and Option Prices}
	\label{fig:CompareIntegralLimits}
	\includegraphics[scale=0.52]{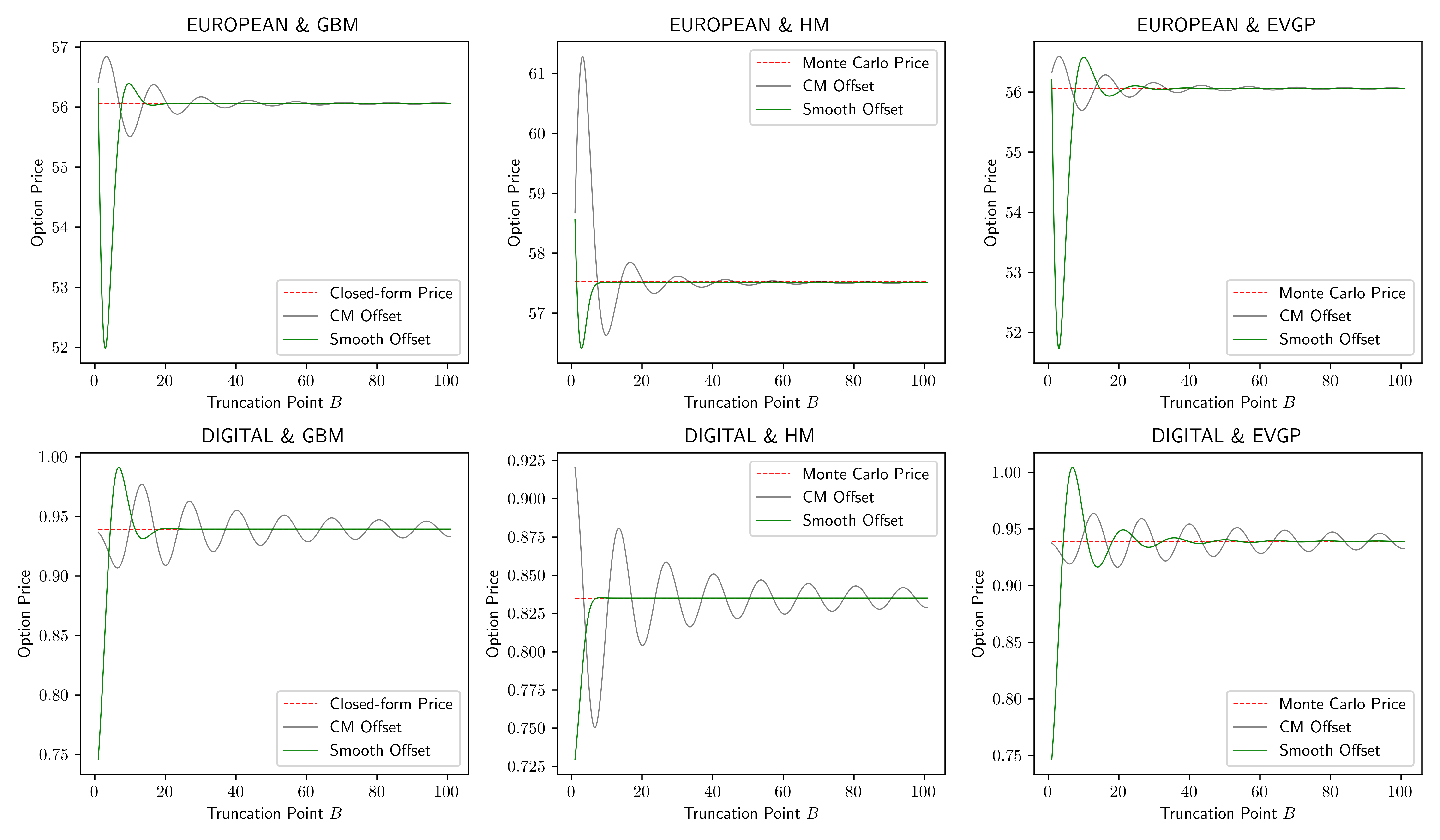}
\end{figure}
In practical applications, both CMA and SOA involve two tunable parameters: the truncation point ($B$) and the number of subintervals in the numerical integration ($N$). It is crucial to choose these parameters appropriately so that the algorithm can strike a balance between the accuracy and the efficiency. A value of $B$ that is too small leads to intolerable errors, whereas an excessively large value of $B$ increases computational cost. The choice of $N$ is inherently dependent on $B$: as $B$ increases, a larger $N$ is required to maintain the desired level of accuracy. 

To ensure a fair comparison of execution times between CMA and SOA, a systematic policy for determining $B$ and $N$ is required. When identical values of $B$ and $N$ are used for both algorithms, their execution times may appear similar. However, such a comparison is misleading, as it disregards the differences in accuracy. In particular, the error associated with CMA can be substantially larger than that of SOA if the selected $B$ and $N$ are insufficient for CMA to achieve convergence.

To address this issue, an algorithm is developed to determine the optimal parameter settings for CMA and SOA. The execution times of both algorithms are then compared under their respective optimal configurations, ensuring a fair performance evaluation. For the $6$ cases, let $\left\{V^{(n,\ell)}\right\}_{\ell=1}^{6}$ be the option prices computed numerically by either CMA or SOA, and $\left\{V^{(b, \ell)}\right\}_{i=\ell}^{\ell}$ represent the corresponding benchmark prices.
The relative pricing error for the $\ell^{\text{th}}$ case is defined as
\begin{equation*}
e_\ell = \left\lvert\frac{V^{(n, \ell)}}{V^{(b, \ell)}} - 1 \right\rvert, \ell=1,2,...,6,
\end{equation*}
and the mean relative pricing error across all cases is given by
\begin{equation*}
\bar{e} = \frac{1}{6}\sum_{\ell=1}^{6}e_\ell,
\end{equation*}
which measures the overall accuracy of both algorithms.

The proposed search algorithm is based on a grid-search strategy.
A predefined error threshold, denoted by $e_{\text{th}}$, is specified, and the mean relative pricing error $\bar{e}$ is evaluated over a range of $(B, N)$ pairs.
The optimal parameter pair is identified as the smallest $(B, N)$ combination that satisfies the condition $\bar{e} \leqslant e_{\text{th}}$.

The settings of the search algorithm are specified as follows: $B_{\min} = 10, B_{\max} = 2000, \delta B = 10, \iota_{\min} =0.1, \iota_{\max} = 5.0, \delta \iota=0.1$. The error threshold is set to $e_{\text{th}}=2 \,\,\text{basis points (bps)}$. 
For prudential considerations, financial institutions typically require that the pricing discrepancy between different algorithms for the same derivative contract should not exceed $20$$-30$ bps, depending on the contract type.
Hence, the selected threshold $e_{\text{th}}$ in this paper is intentionally more stringent than those adopted in practice. The optimal parameters obtained from the search algorithm are as follows:
\begin{itemize}
	\item $B^*=360, \iota^*=1.6, N^*=\lfloor\iota^*B^*\rfloor=576$ for CMA (with mean error $\bar{e} = 1.9716$ bps);
	\item $B^*=40, \iota^* = 1.6, N^*=\lfloor\iota^*B^*\rfloor=64$ for SOA (with mean error $\bar{e} = 1.9789$ bps).
\end{itemize}
After obtaining the optimal parameters, to further quantify the computational efficiency of SOA relative to CMA, we estimate the ratio $\frac{t_{\text{SOA}}}{t_{\text{CMA}}}$ using the linear regression model without an intercept term \eqref{eq:OLSEqnForExeTimes}:
\begin{equation}
\label{eq:OLSEqnForExeTimes}
t_{\rm SOA} =\beta t_{\rm CMA} + \varepsilon.
\end{equation} 
The specific results are shown in Figure \ref{fig:OLSExeTime} and Table \ref{tab:OLSExeTime}. 
The data exhibit a statistically significant linear relationship, with an estimated slope of $\hat{\beta} = 0.2259$, indicating that the computational time required by SOA is, on average, approximately $22.59\%$ of that required by CMA.

\begin{figure}[h]
	\centering
	\caption{Relationship between the Execution Times of SOA and CMA}
	\label{fig:OLSExeTime}
	\includegraphics[scale=0.6]{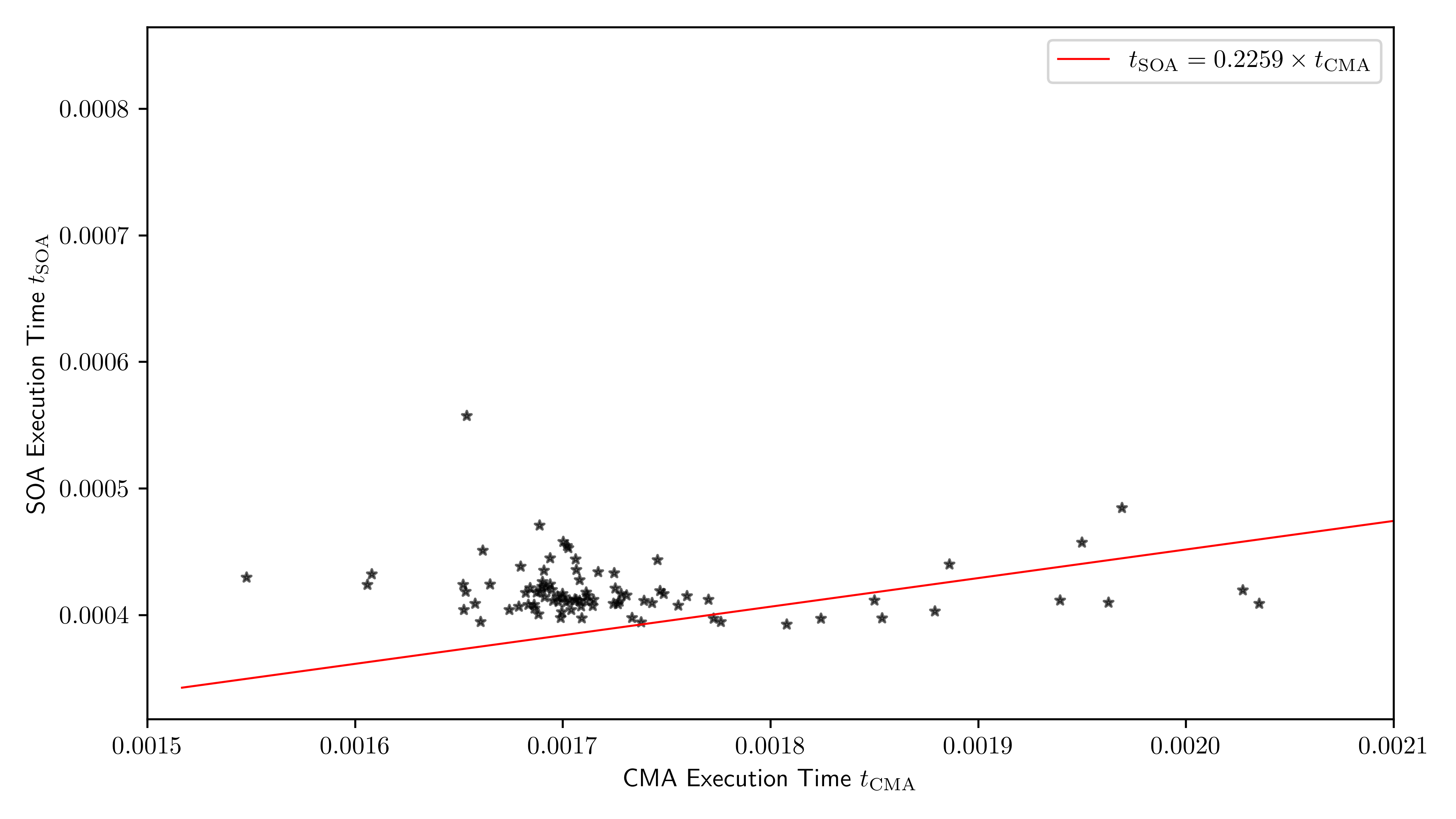}
\end{figure}

\begin{table}[h]
	\renewcommand{\arraystretch}{1.3}
	\centering
	\begin{threeparttable}
		\caption{Regression Results for \eqref{eq:OLSEqnForExeTimes}}
		\label{tab:OLSExeTime}
		\begin{tabular}{ccccccccccccccccc}
			\hline
			Coefficient & Estimate & Standard Error & $t-$Statistic & $p-$Value & $95\%$ Confidence Interval \\
			\hline
			$\beta$ & $0.2259$ & $1.05\times 10^{-3}$&$236.450$ & $4.4587\times 10^{-138}$& $[0.2485, 0.2527]$\\
			\hline
		\end{tabular}
	\end{threeparttable}
\end{table}

\section{FFT for Multiple Options Pricing
	\label{sec:FrameworksForPricingMultOptions}}
\subsection{FFT-based Framework}
In this section, we integrate SOA with the fast Fourier transform. It is noteworthy that although FFT is also employed in \cite{9}, their formulation is designed for single option pricing tasks, whereas we generalize the FFT-based algorithm to handle multiple options simultaneously. In detail, it is natural to consider the application of FFT for multiple-option pricing tasks, as these tasks involve evaluating a set of summations of the form in \eqref{eq:NumericalFormulaHatVk}. However, the framework is applicable only when all options share identical attributes, with strike prices being the sole varying attribute. Under this condition, pricing the entire option portfolio becomes equivalent to computing a discrete Fourier transform (DFT), which can be efficiently accelerated by FFT.

Suppose there are $L$ options with identical attributes, differing only in their strike prices. Denote the logarithms of these strike prices by $\tilde{k}_1,\tilde{k}_2,...,\tilde{k}_L$. Let $\underline{k}, \overline{k}\in\mathbb{R}$ be two numbers satisfying
\begin{equation*}
\underline{k} < k_{\min} < k_{\max} < \overline{k},
\end{equation*}
where
\begin{equation*}
k_{\min} = \min \{ \tilde{k}_1, \tilde{k}_2, \ldots, \tilde{k}_L \},\,\,k_{\max} = \max \{ \tilde{k}_1, \tilde{k}_2, \ldots, \tilde{k}_L \}.
\end{equation*}
In other words, an interval $\left[\underline{k}, \overline{k},\right]$ is needed to encompass the range $\left[k_{\min}, k_{\max}\right]$.
The lower bound $\underline{k}$ is specified by the model developer, whereas the upper bound $\overline{k}$ is determined adaptively according to the procedure described later.

For any log-strike price $k$, the approximation in \eqref{eq:NumericalFormulaHatVk} yields
\begin{equation}
\label{eq:EstimateVhat}
\hat{V}(k) \approx\frac{1}{\pi}\int_0^B e^{-izk}\eta(z)dz \approx \frac{\Delta z}{\pi}\sum_{j=0}^{N-1}w_je^{-ijk\Delta z}\eta(z_j),
\end{equation}
where $N$ cannot be chosen arbitrarily and must be determined according to specific considerations. For the moment, it is assumed that a suitable value of $N$ has been identified, with the grid spacing $\Delta z = \frac{B}{N-1}$.
The grid points for the log-strike prices are generated as
\begin{equation*}
k_n = \underline{k} + n\Delta k, n=0, 1, ..., N-1,
\end{equation*}
where the largest element, $k_{N-1}$, corresponds to the upper bound $\overline{k}$ introduced earlier.
Applying \eqref{eq:EstimateVhat} to $k_n$ yields
\begin{equation*}
\begin{aligned}
\hat{V}(k_n) \approx \frac{\Delta z}{\pi}\sum_{j=0}^{N-1}w_je^{-ij(\underline{k}+n\Delta k)\Delta z}\eta(z_j)
=\sum_{j=0}^{N-1}e^{-i\Delta k\Delta z nj}
\left[\frac{\Delta z}{\pi}w_je^{-ij\underline{k}\Delta z}\eta(z_j)\right].
\end{aligned}
\end{equation*}
To ensure that the above expression exactly matches the form of DFT, the grid spacings must satisfy
\begin{equation}
\label{eq:ProductOfDkDz}
\Delta k\Delta z = \frac{2\pi}{N}.
\end{equation}
We then define
\begin{equation*}
x_j = \frac{\Delta z}{\pi}w_je^{-ij\underline{k}\Delta z}\eta(z_j), j=0, 1,..., N-1.
\end{equation*}

Evaluating these summations can be accelerated by FFT, as $\left\{\hat{V}(k_n)\right\}_{n=0}^{N-1}$ represents the $N$-point DFT of $\left\{x_n\right\}_{n=0}^{N-1}$. Once $\left\{\hat{V}(k_n)\right\}_{n=0}^{N-1}$  is obtained, offset terms are added back to recover the option prices on the corresponding log-strike grids $\left\{k_n\right\}_{n=0}^{N-1}$.

Two remaining issues must be addressed. First, the log-strike prices of the option portfolio, denoted by $\left\{\tilde{k}_\ell\right\}_{\ell=0}^{L}$, do not coincide with the grid points $\left\{k_n\right\}_{n=0}^{N-1}$. To obtain prices at the desired strikes, linear interpolation is applied. Second, to ensure that the linear interpolation can be performed for all indices $\ell$, the upper bound of the strike grid must cover the maximum log-strike price. This requires that
\begin{equation}
\label{eq:ConditionOfDk}
\overline{k} = k_{N-1} > k_{\max} \Longrightarrow \underline{k} + (N-1)\Delta k > k_{\max}.
\end{equation}
By \eqref{eq:ProductOfDkDz}, we have
\begin{equation}
\label{eq:ConditionOfDk2}
\overline{k} = \underline{k} + \frac{(N-1)^2}{N}\frac{2\pi}{B} > k_{\max}.
\end{equation}

Inequality \eqref{eq:ConditionOfDk2} provides some insights into the efficiency of SOA when combined with FFT. Since SOA requires a smaller truncation point $B$ than CMA, its corresponding $\frac{2\pi}{B}$ is larger. Consequently, the minimum value of $N$ needed to satisfy \eqref{eq:ConditionOfDk2} is smaller for SOA, allowing its FFT computations to be completed faster.

Additionally, to mitigate potential numerical instabilities of the FFT near the boundaries, a symmetry constraint is imposed:
\begin{equation}
\label{eq:FFTSymmetricCondition}
k_{\min} - \underline{k} =  \overline{k} -  k_{\max}.
\end{equation}
Combining \eqref{eq:ConditionOfDk2} and \eqref{eq:FFTSymmetricCondition} yields
\begin{equation*}
\underline{k} = \frac{1}{2}\left[k_{\min}+k_{\max} -  \frac{(N-1)^2}{N}\frac{2\pi}{B}\right].
\end{equation*}
The optimal parameters identified in section \ref{subsec:SOANumExp} are $B^*=360$,  $N^*=576$ for CMA and $B^*=40$,  $N^*=64$ for SOA. The previously determined optimal parameters $B^*$ and $N^*$   remain applicable, as they satisfy all the above conditions.
\begin{figure}
	\centering
	\caption{Relationship between Prices Computed by CMA-FFT and CMA-OBO}
	\label{fig:FFTPriceCMAPriceCompare}
	\includegraphics[scale=0.6]{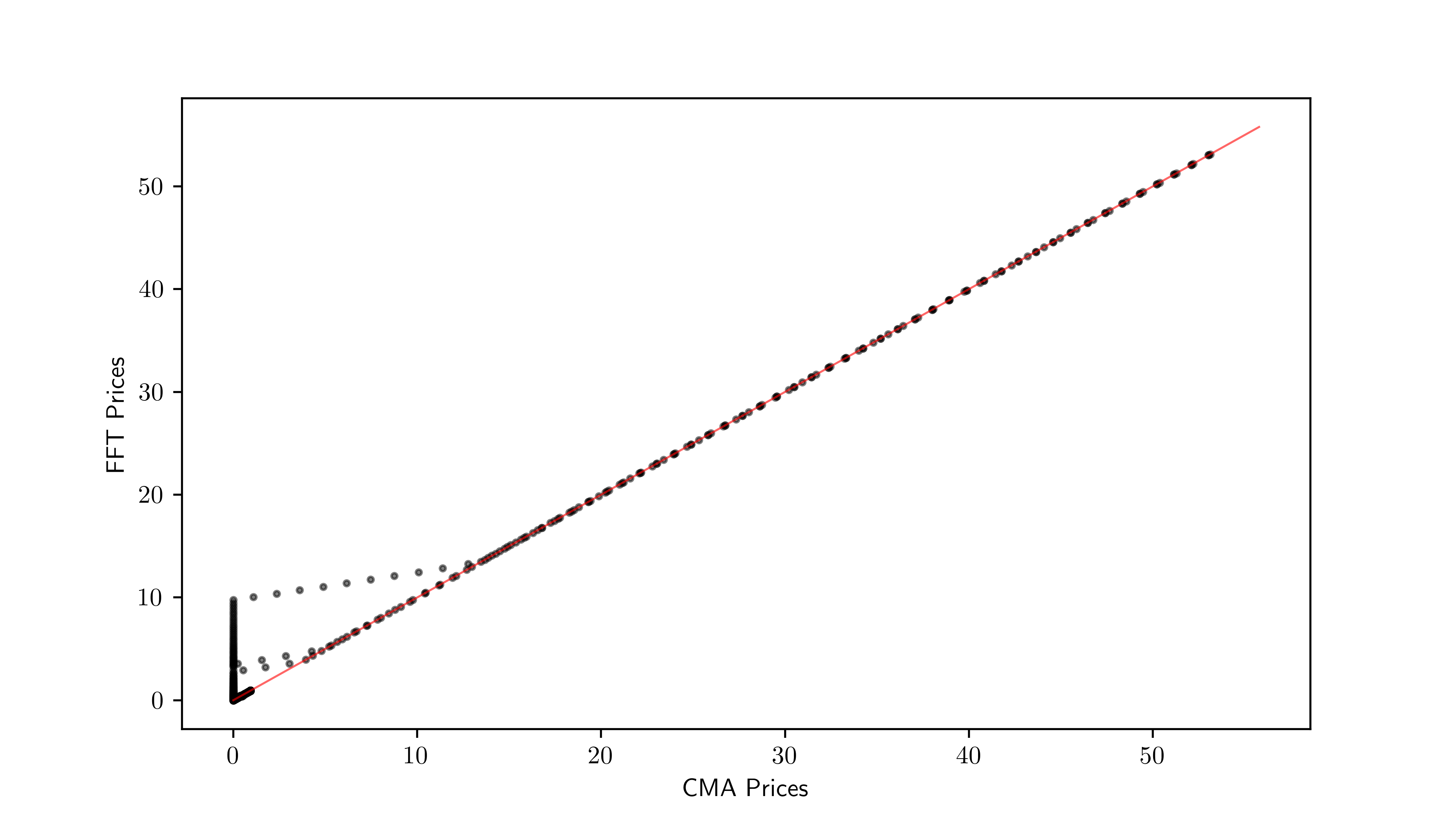}
\end{figure}
\begin{figure}
	\centering
	\caption{Relationship between Prices Computed by SOA-FFT and SOA-OBO}
	\label{fig:FFTPriceSOAPriceCompare}
	\includegraphics[scale=0.6]{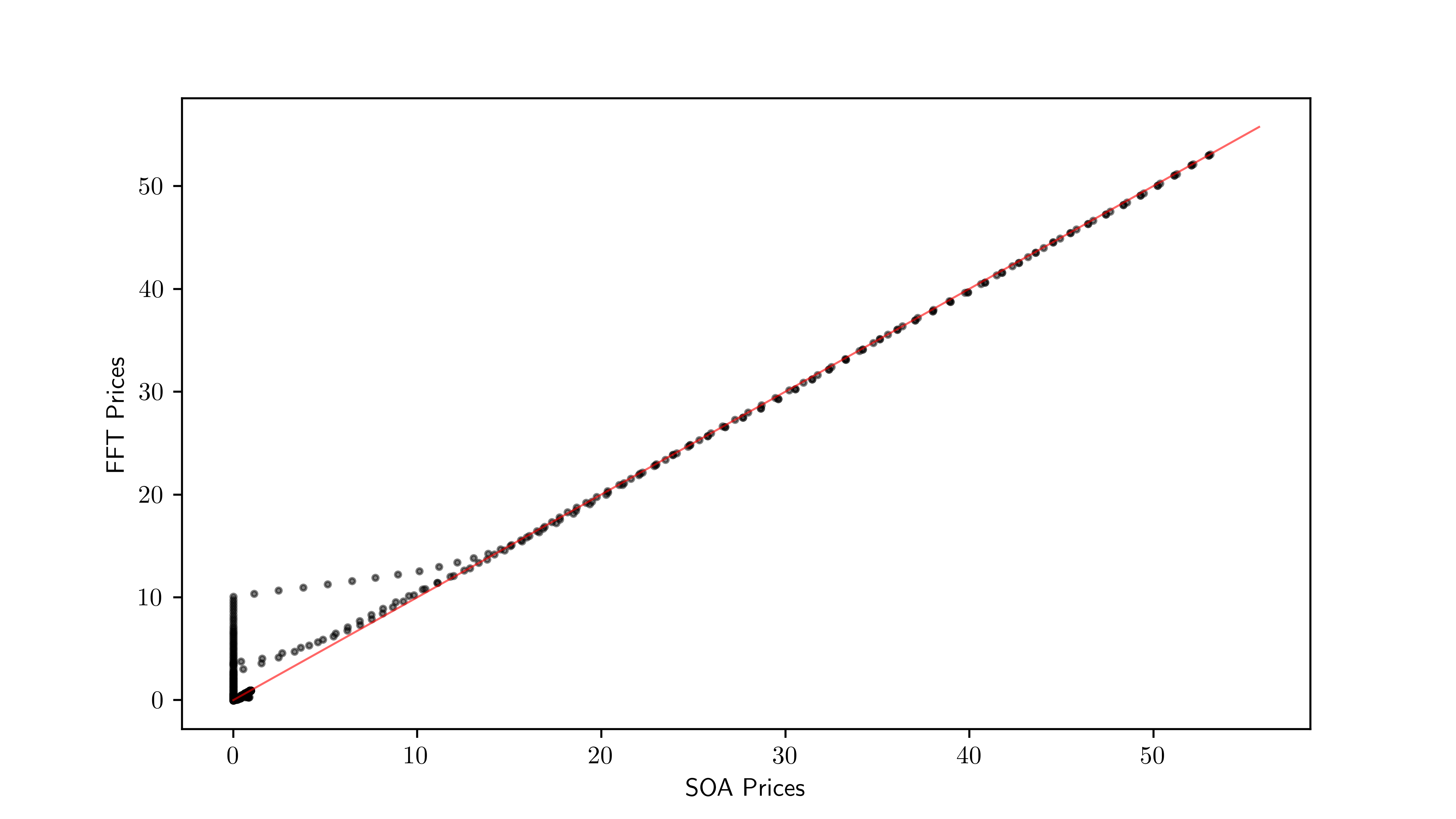}
\end{figure}
\subsection{Numerical Experiments for the FFT-based Framework}
The settings listed in Tables \ref{tab:DetailsOfOptions} and \ref{tab:DetailsOfStockPriceModels} are retained for the numerical experiments, except that the strike price is now varied from $50.0$ to $150.0$ with a step size of $1.0$. This configuration generates a portfolio consisting of $101$ options, which are priced using the FFT-based framework. 

To evaluate the accuracy of the FFT-based algorithms, two comparative analyses are conducted. The first comparison examines the pricing consistency between CMA-FFT and CMA-OBO (see Figure \ref{fig:FFTPriceCMAPriceCompare}), while the second compares SOA-FFT with SOA-OBO (see Figure \ref{fig:FFTPriceSOAPriceCompare}).

To evaluate computational efficiency, the execution times of CMA-FFT, SOA-OBO, and SOA-FFT are measured. Two pairwise comparisons are examined: $t_{\text{SOA-OBO}}$ \& $t_{\text{SOA-FFT}}$ and $t_{\text{CMA-FFT}}$ \& $t_{\text{SOA-FFT}}$.
 The results are presented in Figures \ref{fig:SOAOBOTimeCost} and \ref{fig:SOACMATimeCost}, respectively.

Two linear regression models (without intercept) are fitted to estimate the ratios 
\begin{equation}
\label{eq:FFTOLSEqExeTime}
\begin{aligned}
t_{\text{SOA-FFT}} = \beta_B t_{\text{SOA-OBO}} + \varepsilon,\\
t_{\text{SOA-FFT}} = \beta_C t_{\text{CMA-FFT}} + \varepsilon.
\end{aligned}
\end{equation}

The FFT-based framework shows numerical instability when pricing out-of-the-money options. This is evidenced by Figure \ref{fig:FFTPriceCMAPriceCompare} and \ref{fig:FFTPriceSOAPriceCompare}, where data points corresponding to higher option values lie close to the 45-degree line, indicating strong agreement between FFT prices and those obtained via standard CMA or SOA. In contrast, data points with lower option values deviate substantially from the 45-degree line, revealing pronounced pricing errors in certain out-of-the-money cases.
Furthermore, the SOA integrated with FFT not only shortens the execution time by over 97\% compared with the standard SOA, greatly improving computational efficiency, but also consistently outperforms CMA when combined with FFT, with the execution time of SOA-FFT being approximately 57.58\% of that of CMA-FFT.
\begin{figure}[h]
	\centering
	\caption{Relationship between $t_{\text{SOA-OBO}}$ and $t_{\text{SOA-FFT}}$}
	\label{fig:SOAOBOTimeCost}
	\includegraphics[scale=0.6]{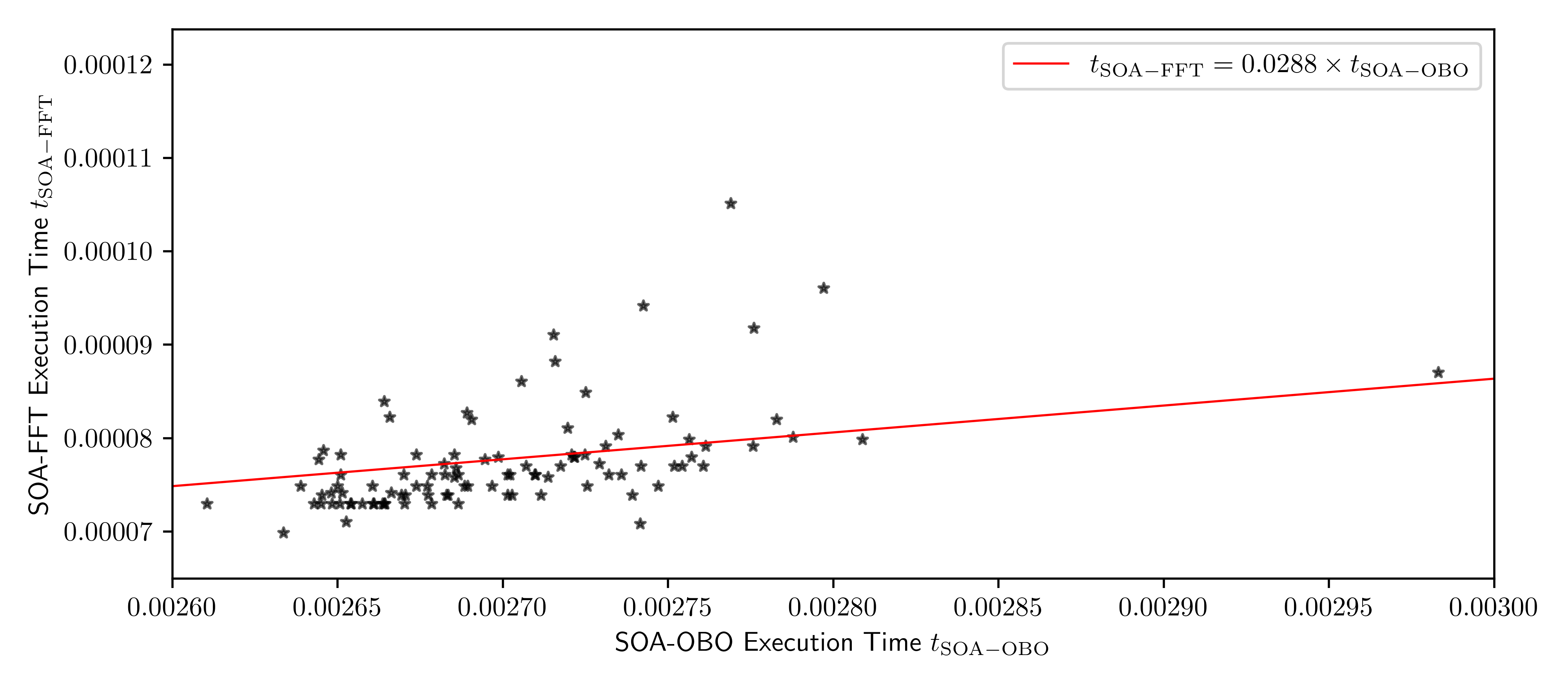}
\end{figure}
\begin{figure}[h]
	\centering
	\caption{Relationship between $t_{\text{CMA-FFT}}$ and $t_{\text{SOA-FFT}}$}
	\label{fig:SOACMATimeCost}
	\includegraphics[scale=0.6]{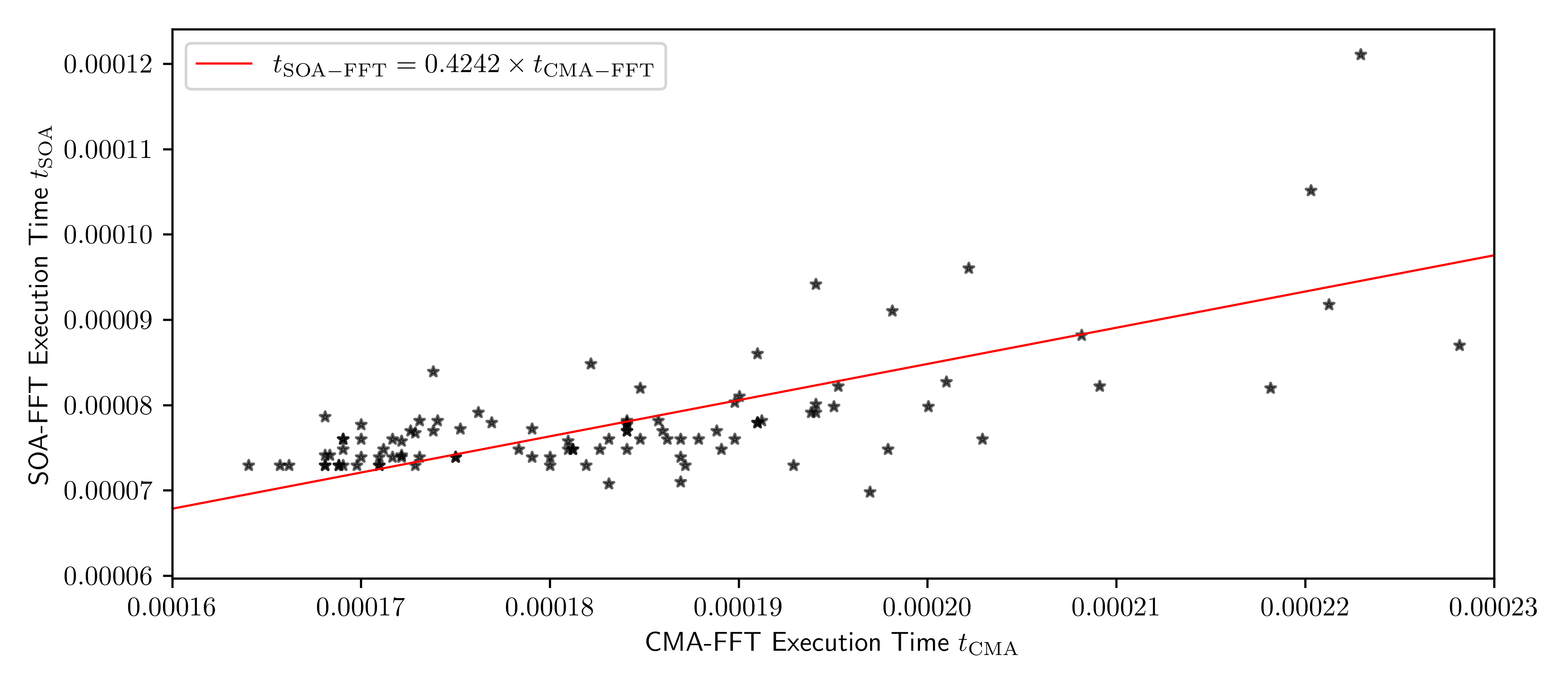}
\end{figure}

\section{ML for Multiple Options Pricing
	\label{sec:MLFrameworksForPricingMultOptions}}
\subsection{ML-based Framework}
ML techniques have been extensively applied to option pricing, where a trained ML model maps a feature vector $\bm{x}$ representing the set of attributes influencing the option price to a scalar output $p(\bm{x})$, the option price. The proposed ML-based framework comprises two components: the training stage and the forecasting stage, as illustrated in Figure \ref{fig:MLFrameworkTraining}.
\begin{figure}[h]
	\centering
	\caption{The ML-based Framework}
	\label{fig:MLFrameworkTraining}
	\begin{minipage}{0.5\textwidth}
		\centering
		\sffamily Training Stage
	\end{minipage}\\\quad\\
	\begin{tikzpicture}[
	node distance=0.4cm,
	every node/.style={font=\rmfamily\normalsize, align=center},
	process/.style={draw, rectangle, rounded corners=2pt, minimum width=1.1cm, minimum height=1cm, thick},
	arrow/.style={thick, -{Stealth[length=3mm,width=2mm]}}
	]
	
	\node[process] (gen) {Random Option\\Contracts Generator};
	\node[ right=of gen] (x) {$\{\bm{x}_n\}_{n=1}^N$};
	\node[process, right=of x] (soa) {SOA\\Pricer};
	\node[right=of soa] (y) {$\{y_n\}_{n=1}^N$};
	\node[process, right=of y] (train) {Training\\Dataset\\$\{\bm{x}_n, y_n\}_{n=1}^N$};
	\node[process, right=of train] (ml) {ML\\Algorithm};
	\node[right=of ml] (m) {$\mathcal{M}_\theta$};
	
	\draw[arrow] (gen)--(x);
	\draw[arrow] (x)--(soa);
	\draw[arrow] (soa)--(y);
	\draw[arrow] (y)--(train);
	\draw[arrow] (train)--(ml);
	\draw[arrow] (ml) -- (m);
	
	\draw[arrow] ([yshift=-0.22cm]gen.south) |- ++(2,-0.3) -| (train.south);
	\end{tikzpicture}\\\quad\\\quad\\
	\begin{minipage}{0.6\textwidth}
		\centering
		\sffamily Forecasting Stage
	\end{minipage}\\\quad\\
	\begin{tikzpicture}[
	node distance=0.5cm,
	every node/.style={font=\rmfamily\normalsize, align=center},
	process/.style={draw, rectangle, rounded corners=2pt, minimum width=1.2cm, minimum height=1cm, thick},
	arrow/.style={thick, -{Stealth[length=3mm,width=2mm]}}
	]
	
	\node[process] (option) {New Option\\Contracts};
	\node[ right=of option] (m) {$\mathcal{M}_{\theta}$};
	\node[process, right=of m] (price) {Theoretical\\Prices};
	
	\draw[arrow] (option)--(m);
	\draw[arrow] (m)--(price);
	\end{tikzpicture}
\end{figure}
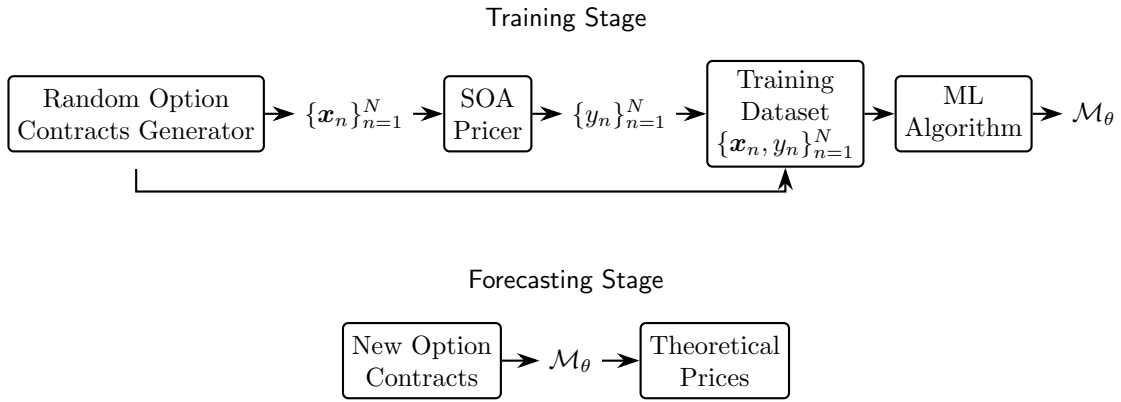

The objective of the training stage is to construct the training dataset and train a ML model capable of accurately mapping any feature vector $\boldsymbol{x}$ to its corresponding option price $y$.
A sufficient number of data points must be collected prior to model training. The SOA algorithm is used to generate synthetic option contracts.
The feature vectors incorporate all relevant attributes of the option contracts: the option type (European or digital), spot price $S_0$, strike price $K$, maturity $T$, risk-free rate $r$, and the parameters of the underlying stock price models $(\sigma, \kappa, \theta, \rho, V_0, \nu)$.

However, some attributes operate on vastly different numerical scales. Attributes $T, r$ and the parameters of stock price models are typically smaller than $1$, whereas $S_0$, $K$ can reach much larger magnitudes. For example, the close price of NVIDIA stock on August 29, 2025 was $\$ 174.18$, and strike prices of call options on NVIDIA shares ranged from \$$50$ to \$$365$. Such discrepancies in scale must be mitigated to ensure that all attributes contribute comparably and to accelerate the convergence of optimization algorithms such as gradient descent.

To achieve this normalization, we fix the spot price at $S_0 = 1$ and transform the strike price as $K^{\prime} = \frac{K}{S_0}$. This procedure is equivalent to rescaling both $S_0$ and $K$ by the factor $\frac{1}{S_0}$, ensuring that all attributes lie on comparable numerical scales. Correspondingly, the outcomes must also be rescaled: for European options, the rescaled prices are the actual prices multiplied by $\frac{1}{S_0}$, whereas for digital options, rescaling leaves the prices unchanged.
Furthermore, the option type is encoded as a binary variable:
\begin{equation*}
\text{OpType} = \left\{
\begin{aligned}
&1, \text{\,\,if the option is European}, \\
&0, \text{\,\,if the option is digital}.
\end{aligned}
\right.
\end{equation*}
After rescaling $S_0$, $K$ and encoding the option type, the resulting feature vector (denoted by $\bm{x}\in\mathbb{R}^{11}$) is
\begin{equation*}
\bm{x} = \left[\text{OpType}, K^\prime, T, r, \sigma, \kappa, \theta, \rho, V_0, \nu\right].
\end{equation*}

The outcome corresponding to each feature vector $\bm{x}$ (denoted by $y$) represents the rescaled option price computed by SOA, which is adopted for its superior efficiency over CMA. The actual option price $y^{\text{actual}}$ can be recovered from $y$ through
\begin{equation*}
\label{eq:MLTechniquesActualPriceRecover}
y^{\text{actual}} = \left[\left(S_0-1\right)\text{OpType} + 1\right] y.
\end{equation*}
The feature vector $\bm{x}$ includes parameters from all stock price models, resulting in redundant elements for a given data point. For example, when the stock prices follow EVGP, only $\sigma$, $\theta$, and $\nu$ are relevant, whereas $\kappa$, $\rho$, and $V_0$ are redundant and thus are set to $-1$. Accordingly, for a European option under EVGP, the feature vector takes the form
\begin{equation}
\bm{x} = \left[1, K^\prime, T, r, \sigma, -1, \theta, -1, -1, \nu\right].
\end{equation}
Other combinations of option types and stock price models are handled analogously.

Data points are generated by the random option contract generator in Figure \ref{fig:MLFrameworkTraining}. All option attributes are generated by random sampling. Specifically, OpType is drawn from a Bernoulli distribution, $\text{Bernoulli}\left(\tfrac{1}{2}\right)$, ensuring an equal number of European and digital options in the dataset. For each remaining attribute, lower and upper bounds $\underline{L}$ and $\overline{L}$ are specified, and samples are drawn from a uniform distribution over the interval $\left[\underline{L}, \overline{L}\right]$. The bounds used in the numerical experiments are summarized in Table \ref{tab:IntervalsOfInputVariablesInTheRandomSampling}.

To ensure that the trained ML models achieve sufficiently small pricing errors across the entire input domain defined in Table \ref{tab:IntervalsOfInputVariablesInTheRandomSampling}, a large number of data points are required. Accordingly, we generate $N^{\text{train}} = 2\times 10^6$ feature vectors $\left\{\bm{x}_n\right\}_{n=1}^{N^{\text{train}}}$, each associated with a target value $y_n$ computed via the SOA pricer in Figure \ref{fig:MLFrameworkTraining}. This yields the training dataset $\left\{\bm{x}_n, y_n\right\}_{n=1}^{N^{\text{train}}}$.

The test dataset is constructed in the same manner and contains $N^{\text{test}}=10000$ options, and the forecasting stage aims to compute the prices of these options using the trained ML model $\mathcal{M}_\theta$. In practical applications, however, the test dataset corresponds to real option contracts observed in production environments rather than simulated data.
\begin{table}
	\renewcommand{\arraystretch}{1.3} 
	\centering
	\caption{Bounds of Input Variables in the Random Sampling}
	\label{tab:IntervalsOfInputVariablesInTheRandomSampling}
	\begin{tabular}{cccccc}
		\hline
		Input Variable & Lower Bound $\underline{L}$ & Upper Bound $\overline{L}$ \\
		\hline
		$S_0$ & $140.0$ & $160.0$ \\
		$K$  & $0.95S_0$ & $1.05S_0$ \\
		$T$  &  $0$ months $(0.00)$ & $12$ months $(1.00)$ \\
		$r$  & $0.0\%$ & $5.0\%$ \\
		$\sigma$& $0.00$ & $0.20$ \\
		$\kappa$& $0.00$ & $0.20$ \\ 
		$\theta$& $0.00$ & $0.20$ \\
		$\rho$& $0.00$ & $0.20$ \\
		$V_0$& $0.00$ & $0.20$ \\
		$\nu$& $0.00$ & $0.20$ \\
		\hline
	\end{tabular}
\end{table}

\subsection{Numerical Experiments for the ML-based Framework}
\subsubsection{Training Neural Networks}
An NN takes a feature vector $\bm{x}$ as input and transforms it into an output through a sequence of weighted linear operations and nonlinear activation functions. Various activation functions have been proposed in the literature, each introducing different forms of nonlinearity,
and the Leaky ReLU function is adopted in this paper. For vector inputs, $S$ is applied elementwise. The architecture of the NN employed in the numerical experiments comprises 11 layers in total, which includes one input dense layer with an input dimension of 11, five hidden dense layers, five Leaky ReLU activation layers, and one output dense layer with an output dimension of 1. Each of the hidden dense layers is configured with 128 neurons.

To train the NN efficiently, the mini-batch gradient descent algorithm is employed with a batch size of $256$.
The training process spans $3000$ epochs to ensure sufficient convergence.
To ensure numerical stability and attain high predictive accuracy, a carefully tuned learning rate schedule is employed. The initial learning rate is set to $3\times10^{-6}$ and is decayed by a factor of $0.1$ after every $500$ training epochs.
This configuration enables gradual and precise updates of the NN parameters during optimization.

Figure \ref{fig:MLNNMeanMSE} illustrates the evolution of the mean MSE loss throughout the training process. For each epoch, the mean MSE loss is calculated as the average of the MSE losses across all mini-batches. As expected, the loss exhibits a consistent downward trend, indicating effective learning and stable convergence.
Once the NN is fully trained, it is applied to forecast out-of-sample option prices for data points in the test dataset.
These predicted prices are then compared with the true values obtained from the SOA. As shown in Figure \ref{fig:MLNNPriceCompare.png}, nearly all points lie close to the 45-degree line, 
demonstrating that the predictions align closely with the actual outcomes and confirming the high accuracy of the NN.
\begin{figure}
	\centering
	\caption{Mean MSE Losses of All Epochs (Log Scale)}
	\label{fig:MLNNMeanMSE}
	\includegraphics[scale=0.6]{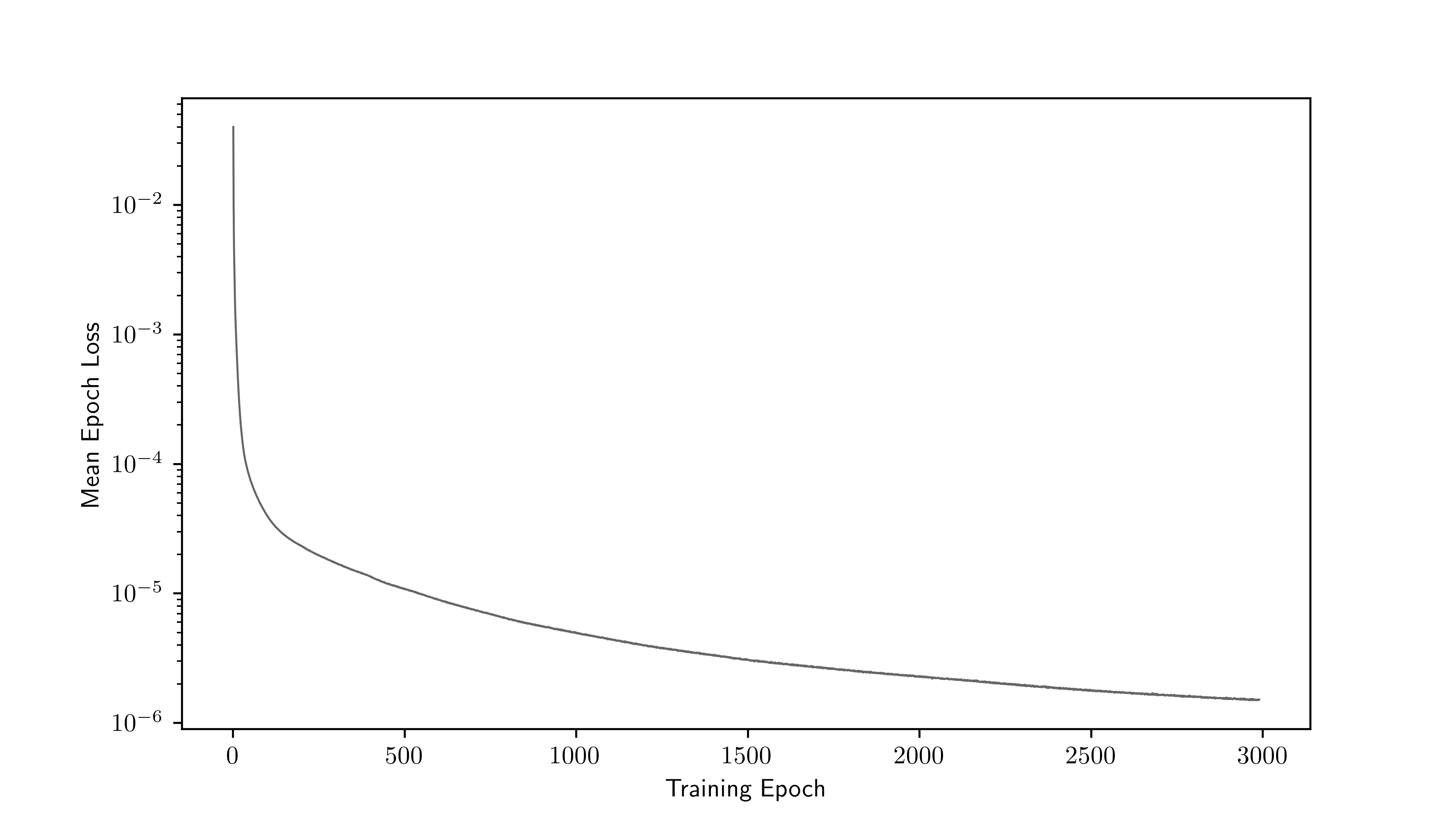}
\end{figure}
\begin{figure}[H]
	\centering
	\caption{Out-of-sample Relationship between NN Predicted Prices and SOA Actual Prices}
	\label{fig:MLNNPriceCompare.png}
	\includegraphics[scale=0.6]{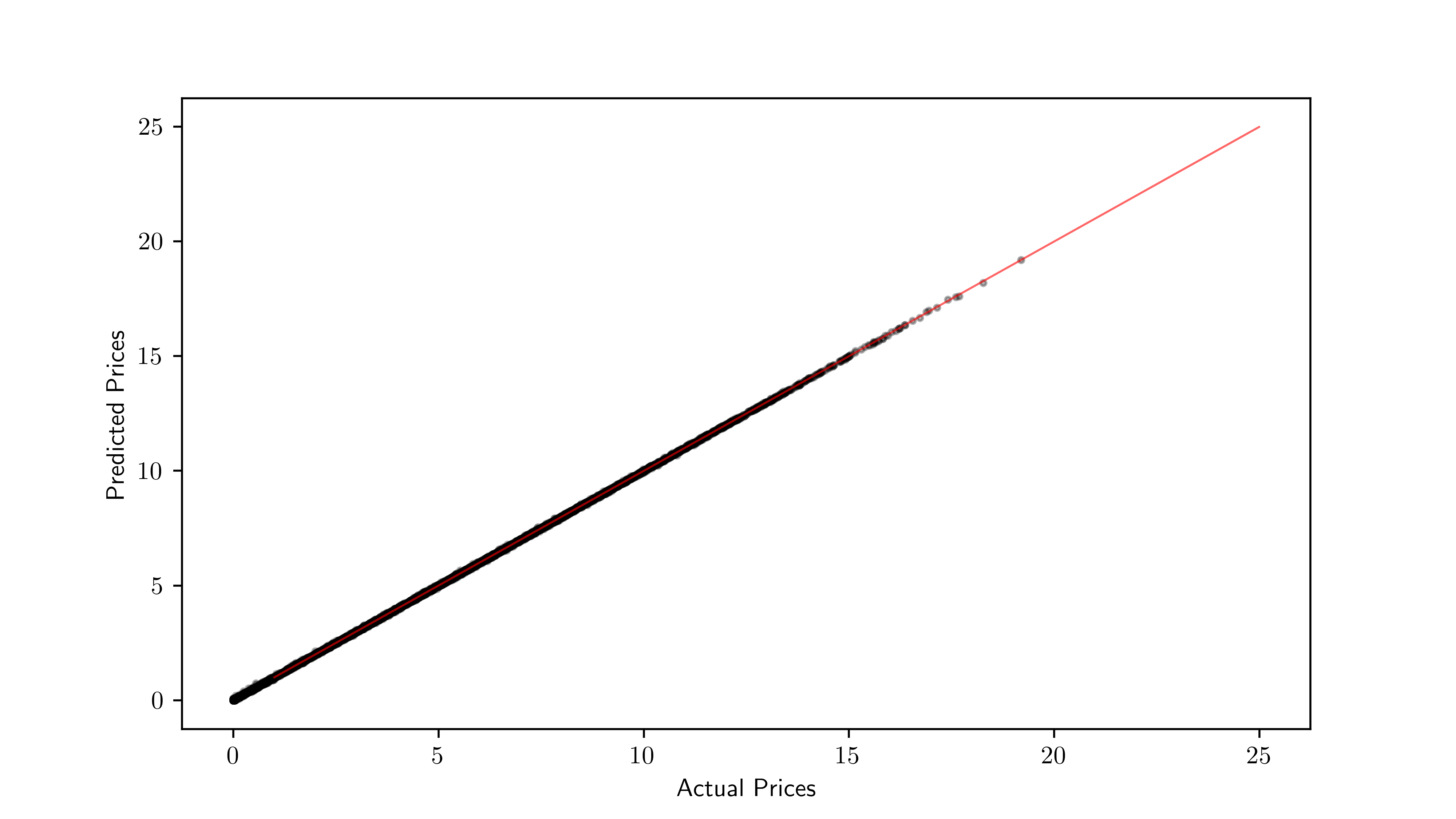}
\end{figure}

To measure the accuracy, let $\hat{y}_n$ and $y^{\text{actual}}_n$ be the $n^{\text{th}}$ predicted and actual prices of the $n^{\text{th}}$ option in the test dataset, respectively, where $n=1,2,...,N^{\text{test}}$. Then for any ML model, the absolute pricing error is defined as
\begin{equation*}
\text{absolute pricing error} = \sqrt{
	\frac{1}{N^{\text{test}}} \sum_{n=1}^{N^{\text{test}}}\left(
	\hat{y}_n - y^{\text{actual}}_n
	\right)^2},
\end{equation*} 
and the relative pricing error is defined as
\begin{equation*}
\text{relative pricing error} = 
\frac{1}{N^{\text{test}}} \sum_{n=1}^{N^{\text{test}}}\left\lvert
\frac{\hat{y}_n}{y^{\text{actual}}_n} - 1
\right\rvert.
\end{equation*} 
For the fully trained NN, the absolute and relative pricing errors are $0.0009$ and $0.0417\%$, respectively.

\subsubsection{Training Ensemble Learning Models: Bagging and Boosting}
Ensemble learning algorithms that utilize decision trees (DTs) as weak learners can generally be classified into two main categories: bagging and boosting. In this study, both approaches are examined. RFs represent the bagging paradigm, and GBDTs represent the boosting paradigm.

Overfitting and underfitting in RFs and GBDTs can be effectively mitigated by controlling the maximum depth of individual decision trees, which serves as a key hyperparameter. Shallow trees may fail to capture the underlying data patterns adequately, leading to underfitting. But excessively deep trees risk overfitting and substantially increasing computational costs.
To balance the model complexity and the generalization capability, the optimal value of the maximum depth is determined through grid search combined with cross-validation. The results indicate that the optimal maximum depth is $20$ for RF and $15$ for GBDT.
Figure \ref{fig:MLRFCV} and \ref{fig:MLGBDTCV} illustrate the relationship between the mean MSE loss obtained from cross-validation and the corresponding maximum depth.
As shown in these figures, increasing the maximum depth beyond $20$ for RF or $15$ for GBDT yields no substantial improvement in the mean MSE loss. In some cases, it even leads to worse mean MSE values due to overfitting.

\begin{figure}
	\centering
	\caption{Relationship between Cross-Validation MSE Loss (Log Scale) and RF Max Depth}
	\label{fig:MLRFCV}
	\includegraphics[scale=0.6]{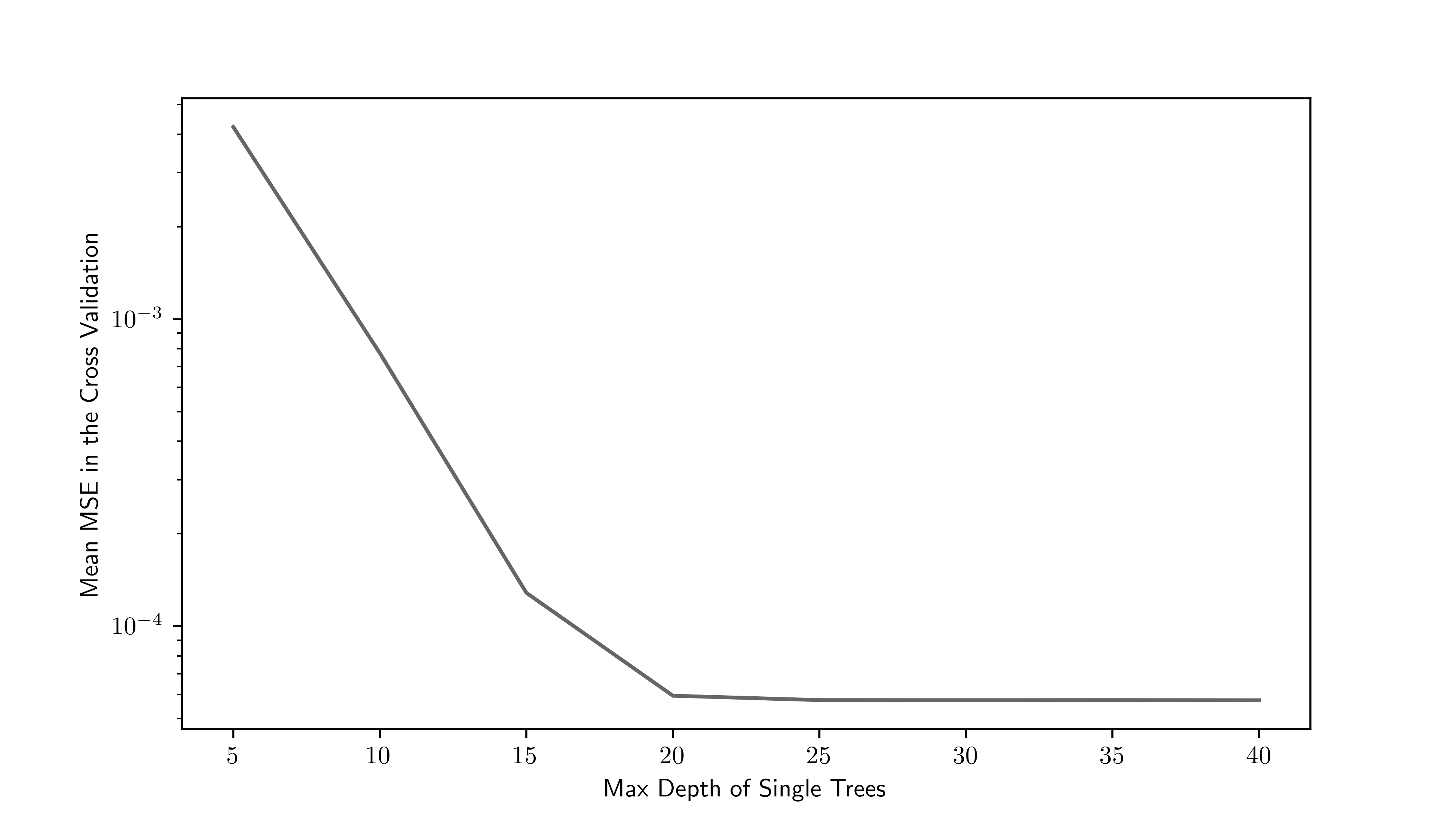}
\end{figure}
\begin{figure}
	\centering
	\caption{Relationship between Cross-Validation MSE Loss (Log Scale) and GBDT Max Depth}
	\label{fig:MLGBDTCV}
	\includegraphics[scale=0.6]{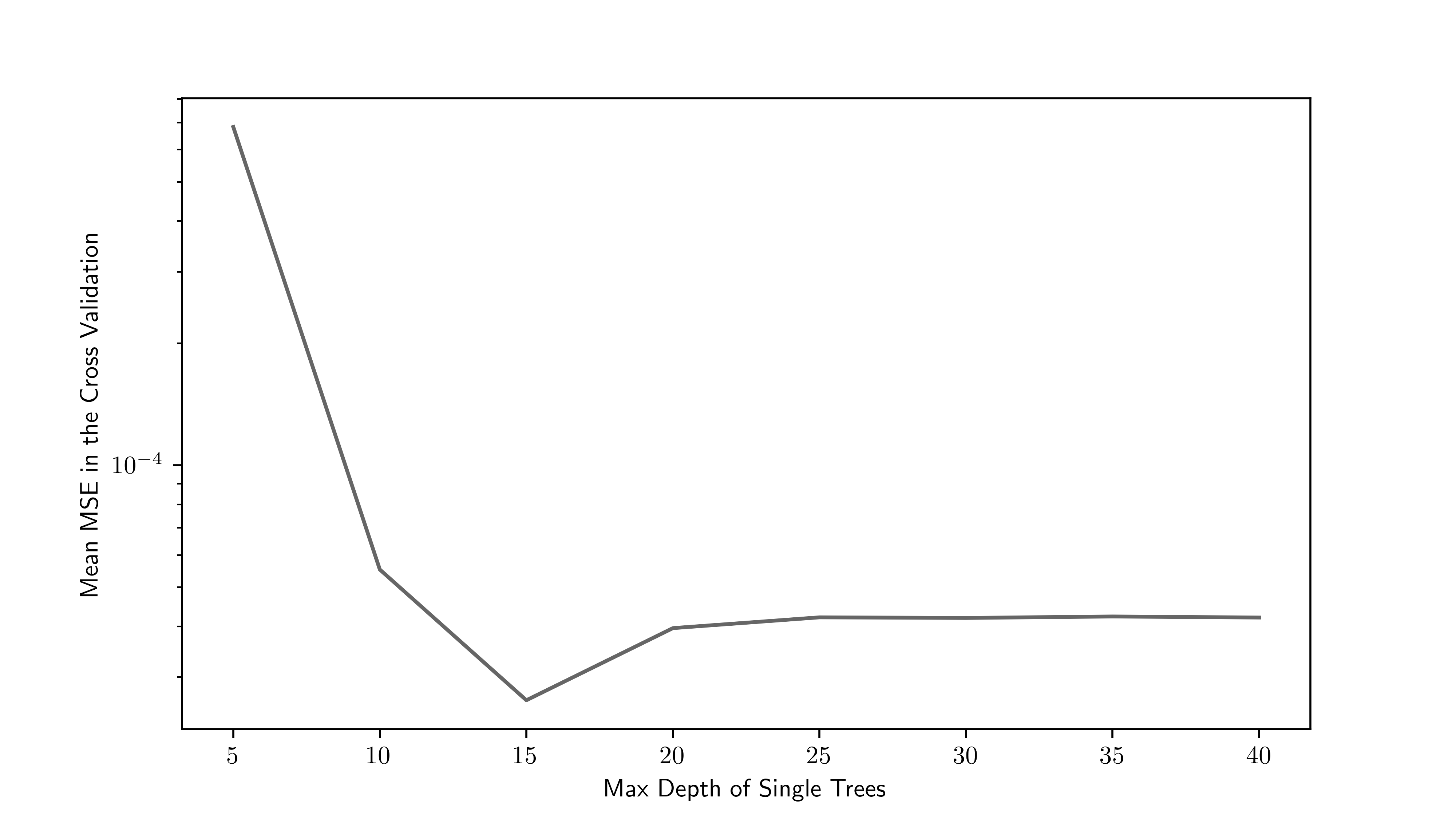}
\end{figure}

During the training of RFs and GBDTs, the number of weak learners is set to $100$. The bootstrapping hyperparameter is fixed at $0.7$, indicating that each weak learner is trained on a randomly selected subset comprising 70\% of the total training dataset ($0.7N^{\text{train}}$). Figure \ref{fig:MLRFPriceCompare.png} compares the prices predicted by RF with those computed by SOA, while Figure \ref{fig:MLGBDTPriceCompare.png} presents analogous results for GBDT. Both RF and GBDT exhibit high predictive accuracy for out-of-sample data. Consistent with the NN results, nearly all data points in Figure \ref{fig:MLRFPriceCompare.png} and \ref{fig:MLGBDTPriceCompare.png} lie close to the 45-degree line, confirming strong agreement between predicted and actual prices.

The absolute pricing errors are $0.0044$ for RF and $0.0008$ for GBDT, while the corresponding relative pricing errors are $0.1884\%$ and $0.0668\%$, respectively. 
\begin{figure}
	\centering
	\caption{Out-of-sample Relationship between RF Predicted Prices and SOA Actual Prices}
	\label{fig:MLRFPriceCompare.png}
	\includegraphics[scale=0.6]{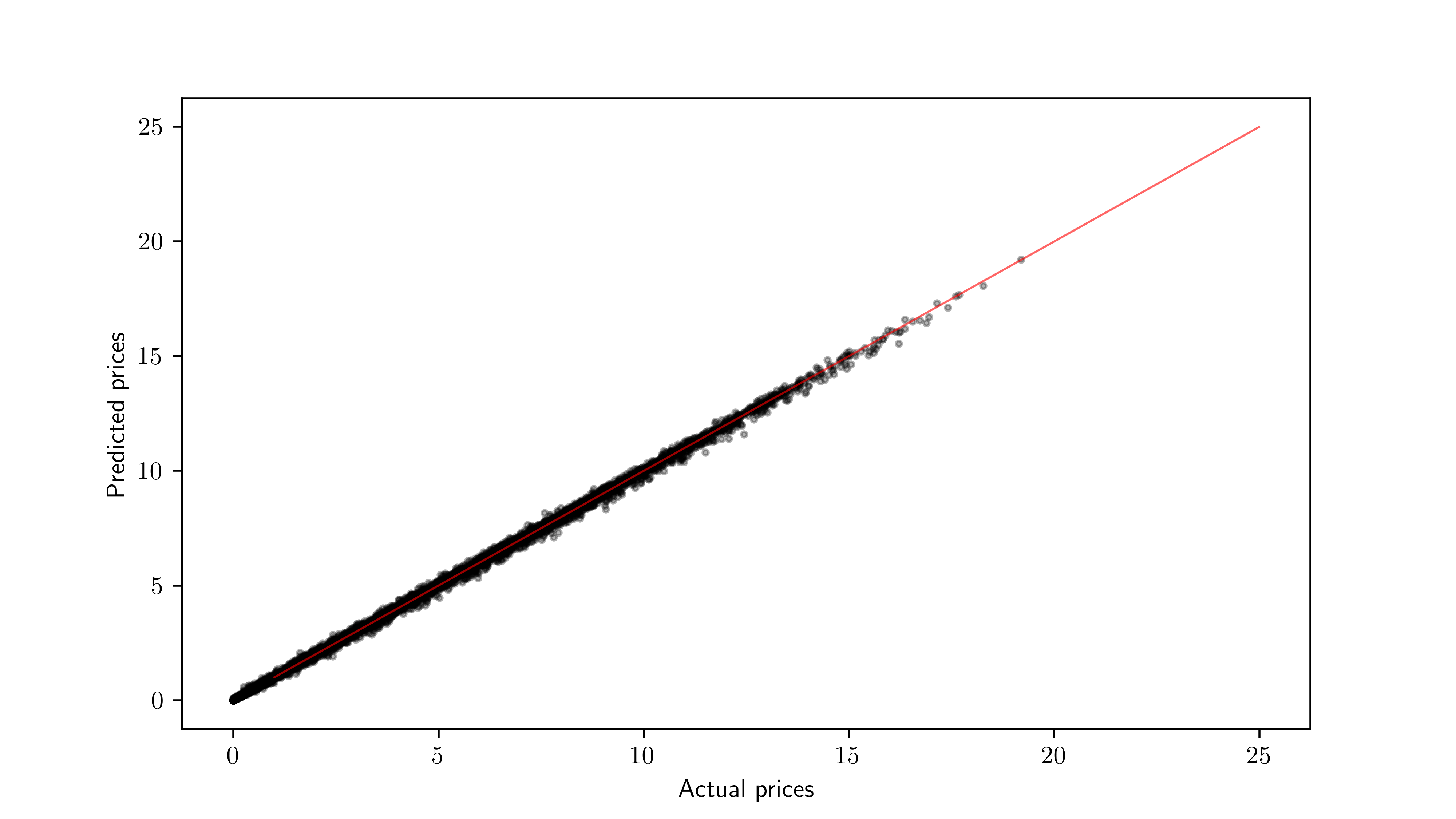}
\end{figure}
\begin{figure}
	\centering
	\caption{Out-of-sample Relationship between GBDT Predicted Prices and SOA Actual Prices}
	\label{fig:MLGBDTPriceCompare.png}
	\includegraphics[scale=0.6]{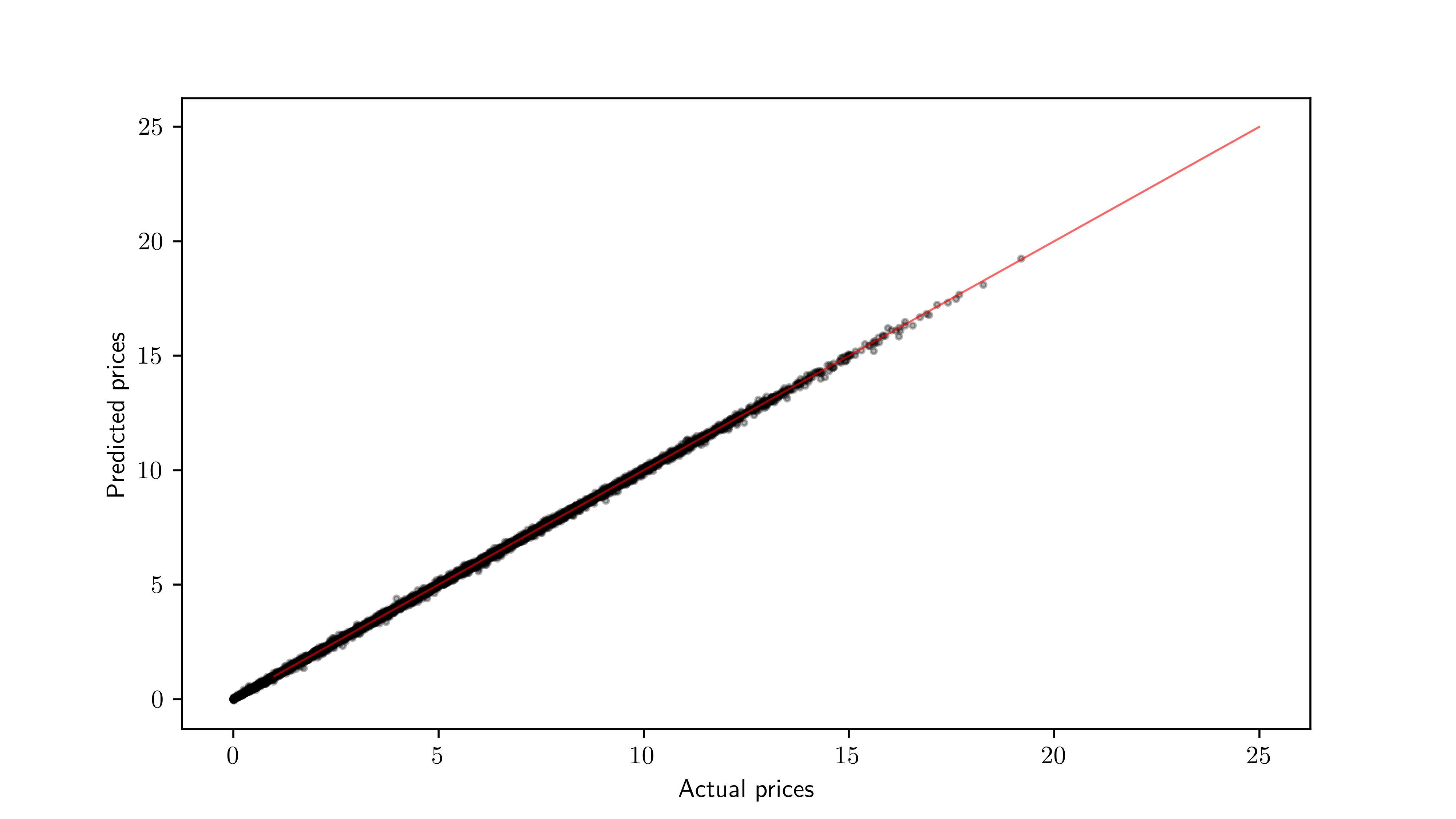}
\end{figure}
\subsubsection{Execution Times}
The execution times of the three ML algorithms are compared against that of SOA. The execution time of each algorithm is defined as the total time required to price all $10000$ options in the test dataset. To mitigate the influence of hardware-induced randomness, the pricing process for the entire test dataset is repeated $100$ times for each algorithm.
Figure \ref{fig:MLNNExeTimeSOAExeTime}, \ref{fig:MLRFExeTimeSOAExeTime} and \ref{fig:MLGBDTExeTimeSOAExeTime} present the comparative execution times of NN, RF, and GBDT relative to standard SOA, respectively.

\begin{figure}
	\centering
	\caption{Relationship between NN Execution Time and SOA Execution Time}
	\label{fig:MLNNExeTimeSOAExeTime}
	\includegraphics[scale=0.6]{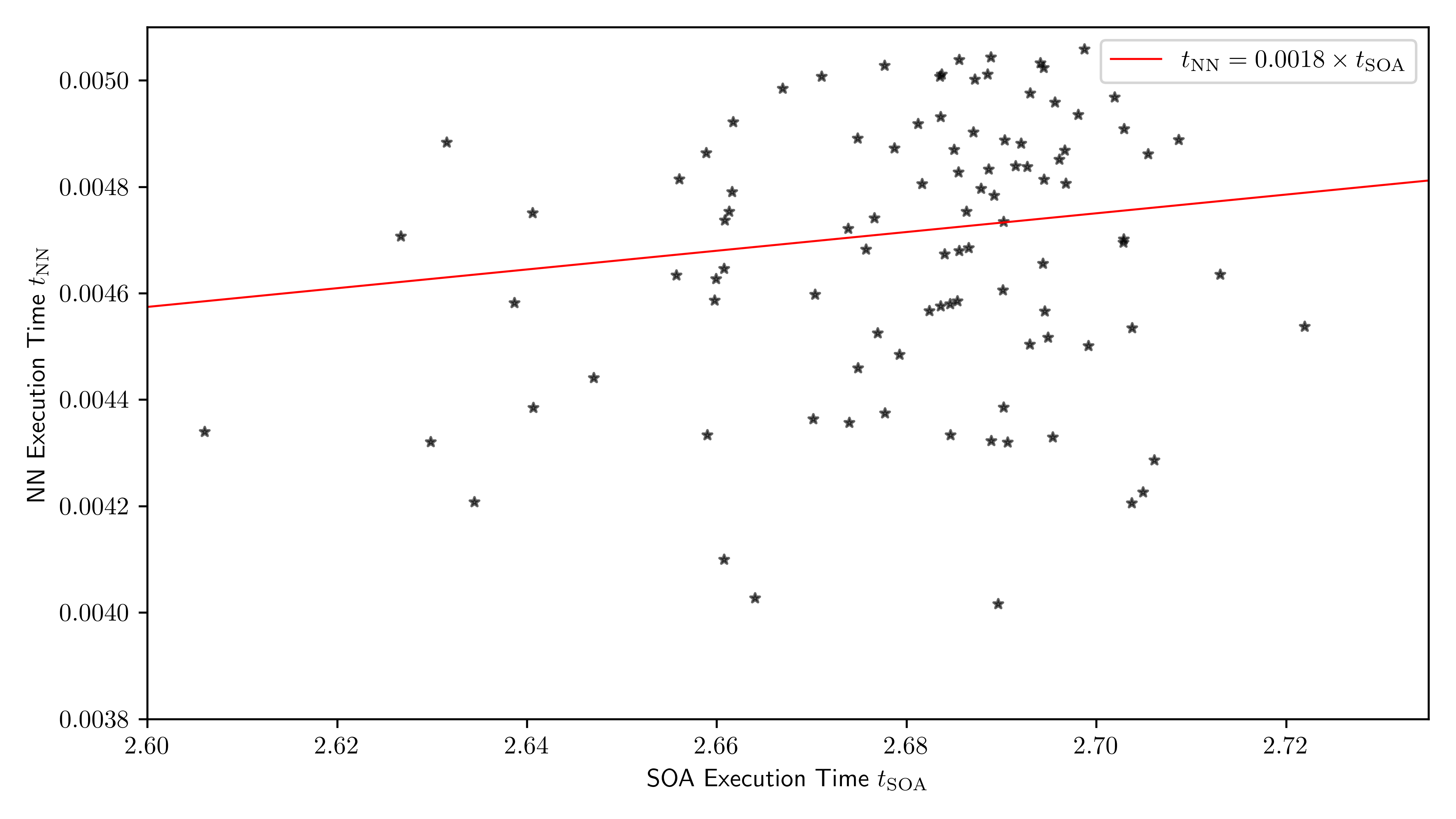}
\end{figure}
\begin{figure}
	\centering
	\caption{Relationship between RF Execution Time and SOA Execution Time}
	\label{fig:MLRFExeTimeSOAExeTime}
	\includegraphics[scale=0.6]{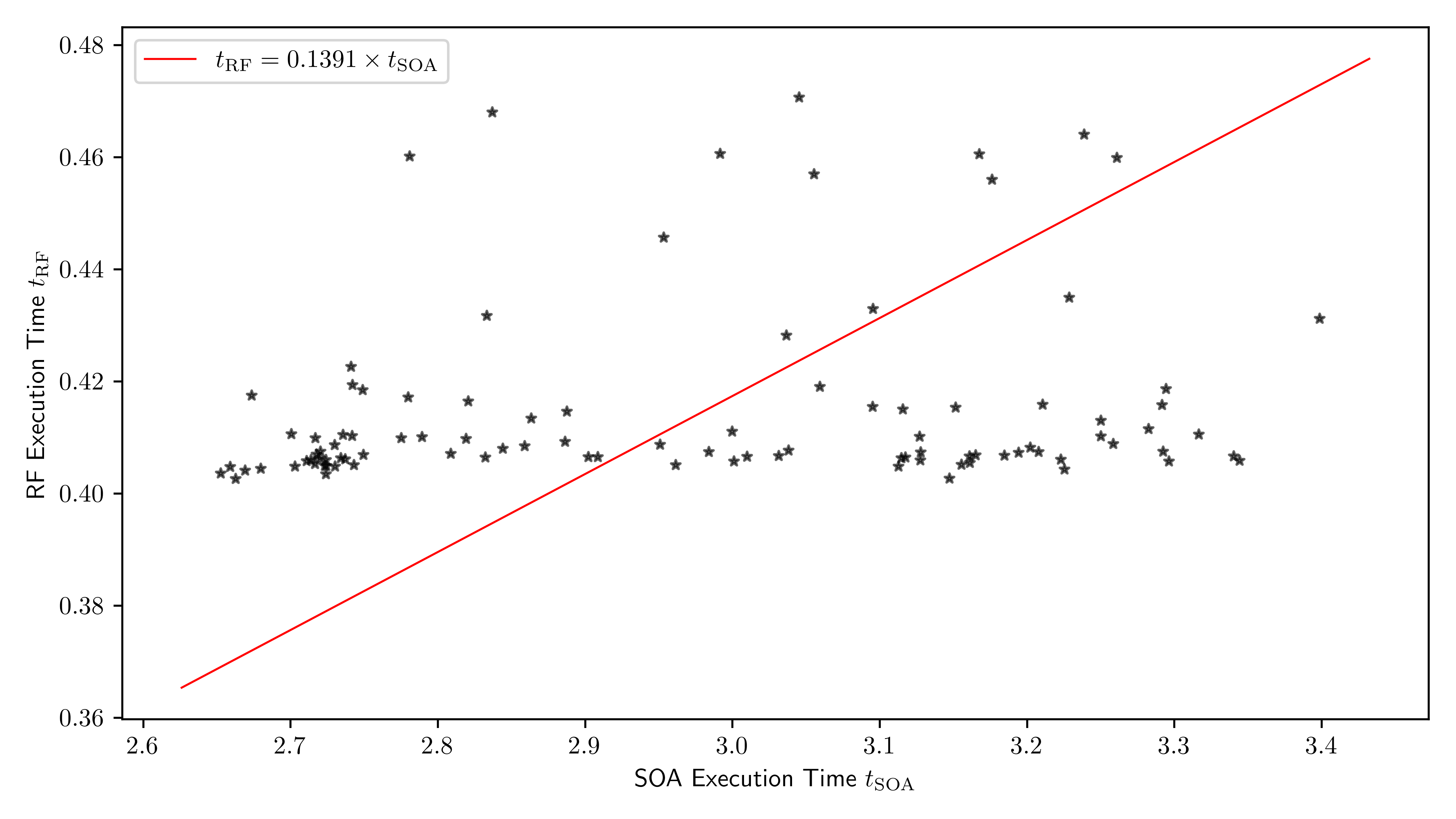}
\end{figure}
\begin{figure}
	\centering
	\caption{Relationship between GBDT Execution Time and SOA Execution Time}
	\label{fig:MLGBDTExeTimeSOAExeTime}
	\includegraphics[scale=0.6]{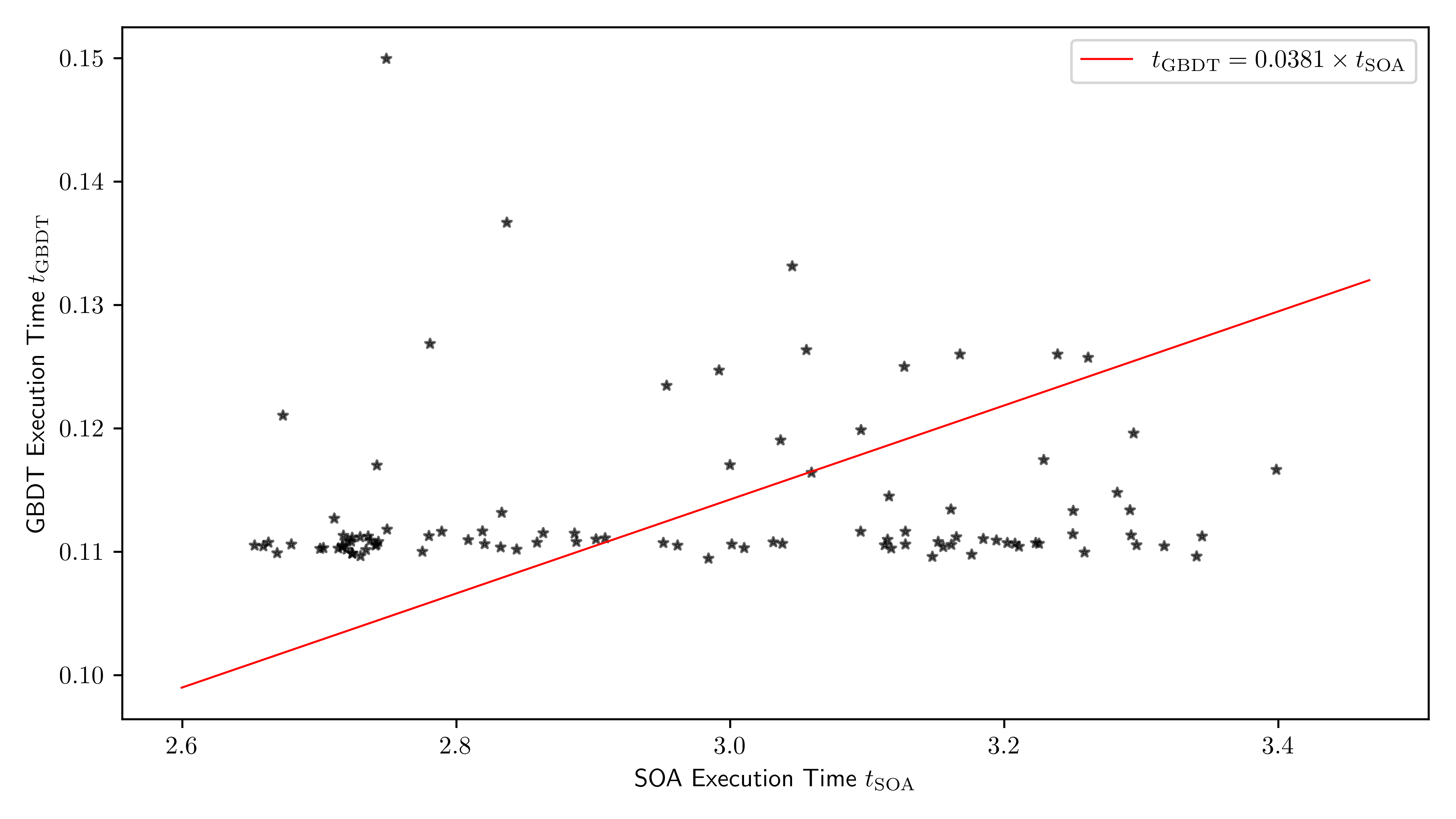}
\end{figure}

Let $t_{\text{NN}}, t_{\text{RF}}, t_{\text{GBDT}}$ and $t_{\text{SOA}}$ be the execution times of NN, RF, GBDT, and SOA, respectively.
To quantify their relative computational efficiency, three linear regression models (without intercept) are fitted to estimate $\frac{t_{\text{NN}}}{t_{\text{SOA}}}$, $\frac{t_{\text{RF}}}{t_{\text{SOA}}}$ and $\frac{t_{\text{GBDT}}}{t_{\text{SOA}}}$:
\begin{equation}
\label{eq:MLOLSEqExeTime}
\begin{aligned}
&	t_{\text{NN}} = \beta_A t_{\text{SOA}} + \varepsilon, \\
&	t_{\text{RF}} = \beta_B t_{\text{SOA}} + \varepsilon, \\
&	t_{\text{GBDT}} = \beta_C t_{\text{SOA}} + \varepsilon. \\
\end{aligned}
\end{equation}
Regression results are summarized in Table \ref{tab:MLOLSEqExeTime}. The findings indicate that the execution times of NN, RF, and GBDT are approximately $0.18\%, 13.91\%$ and $3.81\%$ of SOA, respectively.
Accordingly, the improvements in computational efficiency are $99.82\%$ for NN, $86.09\%$ for RF and $96.19\%$ for GBDT.
All estimated regression coefficients are statistically significant and exhibit narrow confidence intervals, confirming the robustness and reliability of these results.

\begin{table}
	\renewcommand{\arraystretch}{1.3}
	\centering
	\begin{threeparttable}
		\caption{Regression Results for \eqref{eq:MLOLSEqExeTime} }
		\label{tab:MLOLSEqExeTime}
		\begin{tabular}{ccccccccccccccccc}
			\hline
			Coefficient & Estimate & Standard Error & $t-$Statistic & $p-$Value & $95\%$ Confidence Interval &  \\
			\hline
			$\beta_A$ & $0.0018$ & $1.57\times 10^{-5}$&$112.034$ & $4.3076\times 10^{-106}$& $[0.0017, 0.0018]$\\
			$\beta_B$ & $0.1391$ & $1.07\times 10^{-3}$&$129.181$ & $3.5673\times 10^{-112}$& $[0.1370, 0.1413]$\\
			$\beta_C$ & $0.0381$ & $3.44\times 10^{-4}$&$110.521$ & $1.6368\times 10^{-105}$& $[0.0374, 0.0387]$\\
			\hline
		\end{tabular}
	\end{threeparttable}
\end{table}

Table \ref{tab:PerformancesOfDifferentMLAlgorithms} summarizes the performance of the three ML algorithms, including their absolute pricing errors, relative pricing errors, and execution times. 
Among the evaluated algorithms, NN demonstrates the lowest absolute and relative pricing errors. It achieves computational efficiency comparable to the FFT-based framework which achieves an improvement in execution time of approximately  $97.12\%$.
While NNs demonstrate outstanding predictive performance and fast execution speed, their training process is considerably more computationally demanding than RF or GBDT. In comparison, GBDTs achieve similar gains in execution efficiency with much lower training complexity. Furthermore, GBDTs offer superior model interpretability and enhanced control over overfitting relative to NNs.
Therefore, both NN and GBDT are recommended for real-world applications, depending on the specific requirements and hardware availability. 
When a sufficient number of data points are available and Graphics Processing Unit (GPU) resources are abundant, NN is undoubtedly the preferred choice. Otherwise, it is advisable for model developers to place greater emphasis on GBDT.
\begin{table}[t]
	\renewcommand{\arraystretch}{1.3} 
	\centering
	\caption{Performances of Three ML Algorithms}
	\label{tab:PerformancesOfDifferentMLAlgorithms}
	\begin{tabular}{cccccc}
		\hline
		ML Algorithm & Absolute Pricing Error & Relative Pricing Error & Improvement in Execution Time \\
		\hline
		NN  &$0.0005$ & $4.17$ bps & 99.82\%\\
		RF  &$0.0044$ & $18.84$ bps & 86.09\%\\
		GBDT&$0.0008$ & $6.68$ bps  & 96.19\%\\
		\hline
	\end{tabular}
\end{table}

While the FFT-based framework also delivers high computational speed, the ML-based framework demonstrates superior numerical stability and pricing accuracy. 
Furthermore, the ML-based framework is less restrictive regarding input option attributes than the FFT-based framework.
Overall, the results indicate that the ML-based framework provides the most balanced performance between accuracy and efficiency, making it the most suitable approach for large-scale option pricing applications based on the FT.

\section{Conclusions\label{sec:Conclusions}}
This paper presents an efficient ML-based framework for multiple option pricing tasks where the options are path-independent and the underlying stock prices follow exponential L{\'e}vy processes.
The data generation is performed using an improved FT-based algorithm, termed SOA, developed as an enhancement of CMA. SOA leverages the theoretical relationship between the smoothness of a function and the tail decay rate of its FT,  introducing a smooth offset term that replaces the original offset in CMA. 
This modification substantially reduces computational cost while preserving pricing accuracy, making SOA particularly suitable for efficiently computing the outcomes for feature vectors in the large-scale datasets in the ML workflow.

Numerical experiments across two option types (European and digital) and three stock price models (GBM, HM, and EVGP) empirically confirm that SOA is substantially more efficient than CMA, typically requiring only $ 60\%-70\% $ of the execution time of CMA under equivalent error tolerances.
The proposed ML-based framework further enhances flexibility and robustness. Three ML algorithms including NN, RF and GBDT are trained on the SOA-generated dataset. Comparative analyses indicate that all three models achieve high accuracy. 
NN achieves the highest accuracy and the greatest improvement in execution time. Although GBDT exhibits slightly lower accuracy and a longer execution time than the NN, it strikes an effective balance among precision, interpretability, and computational efficiency. The selection between GBDT and NN should be guided by the nature of the task and the computational capabilities of the available hardware.
Although an alternative FFT-based framework provides computational acceleration through the FFT and represents a natural extension of SOA (or CMA), it suffers from numerical instability for deep out-of-the-money options and imposes strict requirements on input option attributes.
The ML-based framework overcomes these limitations, making it particularly suitable for practical deployment in financial institutions that price large and heterogeneous option portfolios.
\section*{Acknowledgments}
The authors would like to express their appreciation to the referees for their useful comments and the editors. Liying Zhang is supported by the National Natural Science Foundation of China (No. 11601514 and No. 11971458), the Fundamental Research Funds for the Central Universities (No. 2023ZKPYL02 and No. 2023JCCXLX01) and the Yueqi Youth Scholar Research Funds for the China University of Mining and Technology-Beijing (No. 2020YQLX03). 2025 Basic Sciences Initiative in Mathematics and Physics.

\end{document}